\definecolor{cblack}{rgb}{0,0,0}
\definecolor{cblue}{rgb}{0.121569,0.466667,0.705882}    % 31,  119, 180
\definecolor{corange}{rgb}{1.000000,0.498039,0.054902}  % 256, 127, 14
\definecolor{cgreen}{rgb}{0.172549,0.627451,0.172549}   % 44,  160, 44
\definecolor{cred}{rgb}{0.839216,0.152941,0.156863}     % 214, 39,  40
\definecolor{cpurple}{rgb}{0.580392,0.403922,0.741176}  % 149, 103, 190
\definecolor{cbrown}{rgb}{0.549020,0.337255,0.294118}   % 141, 86,  75
\definecolor{cpink}{rgb}{0.890196,0.466667,0.760784}
\definecolor{cgray}{rgb}{0.498039,0.498039,0.498039}
\definecolor{cgreen2}{rgb}{0.7372549019607844, 0.7411764705882353, 0.13333333333333333}
\newtheorem{theorem}{Theorem}[section]
\newtheorem{remark}[theorem]{Remark}
\newtheorem{lemma}[theorem]{Lemma}
\newtheorem{definition}[theorem]{Definition}
\newtheorem{corollary}[theorem]{Corollary}
\newtheorem{conjecture}[theorem]{Conjecture}
\theoremstyle{plain} % just in case the style had changed
\newcommand{\thistheoremname}{}
\newtheorem*{genericthm}{\thistheoremname}
\newcommand{\Poincare}{Poincar\'{e}}
\newcommand{\what}{\widehat}
\def\moverlay{\mathpalette\mov@rlay}
\def\mov@rlay#1#2{\leavevmode\vtop{%
   \baselineskip\z@skip \lineskiplimit-\maxdimen
   \ialign{\hfil$\m@th#1##$\hfil\cr#2\crcr}}}
\newcommand{\charfusion}[3][\mathord]{
    #1{\ifx#1\mathop\vphantom{#2}\fi
        \mathpalette\mov@rlay{#2\cr#3}
      }
    \ifx#1\mathop\expandafter\displaylimits\fi}
\newcommand{\EE}{\mathbb{E}}
\newcommand{\NN}{\mathbb{N}}
\newcommand{\PP}{\mathbb{P}}
\newcommand{\QQ}{\mathbb{Q}}
\newcommand{\RR}{\mathbb{R}}
\renewcommand{\SS}{\mathbb{S}}
\DeclareSymbolFont{bbold}{U}{bbold}{m}{n}
\DeclareSymbolFontAlphabet{\mathbbold}{bbold}
\newcommand{\One}{\mathbbold{1}}
\newcommand{\bx}{\bm x}
\newcommand{\by}{\bm y}
\newcommand{\bz}{\bm z}
\newcommand{\bA}{\bm A}
\newcommand{\bB}{\bm B}
\newcommand{\bJ}{\bm J}
\newcommand{\bP}{\bm P}
\newcommand{\bQ}{\bm Q}
\newcommand{\bU}{\bm U}
\newcommand{\bV}{\bm V}
\newcommand{\bW}{\bm W}
\newcommand{\one}{\bm{1}}
\newcommand{\sN}{\mathcal{N}}
\newcommand{\sO}{\mathcal{O}}
\DeclareSymbolFont{sfoperators}{OT1}{cmss}{m}{n}
\DeclareSymbolFontAlphabet{\mathsf}{sfoperators}
\renewcommand{\operator@font}{\mathgroup\symsfoperators}
\DeclareMathOperator{\sym}{sym}
\DeclareMathOperator{\Tr}{Tr}
\DeclareMathOperator{\rank}{rank}
\DeclareMathOperator{\Unif}{Unif}
\newcommand{\Ex}{\mathop{\mathbb{E}}}  % puts symbols below
\newcommand{\Prx}{\mathop{\mathrm{Pr}}}
\newcommand{\Haar}{\mathsf{Haar}}
\newcommand{\TV}{\mathsf{TV}}
\newcommand{\samp}{\mathtt{samp}}
\newcommand{\aapprox}{\mathtt{approx}}
\newcommand{\proj}{\mathsf{proj}}
\renewcommand{\epsilon}{\varepsilon}
\newcommand\numberthis{\addtocounter{equation}{1}\tag{\theequation}}
\title{Optimality of Glauber dynamics for general-purpose Ising model sampling and free energy approximation}
\date{November 30, 2023}
\author{Dmitriy Kunisky\thanks{Email: \texttt{dmitriy.kunisky@yale.edu}. Partially supported by ONR Award N00014-20-1-2335 and a Simons Investigator Award to Daniel Spielman.}}
\affil{Department of Computer Science, Yale University}
\begin{document}

\maketitle

\thispagestyle{empty}

\begin{abstract}
    Recently, Eldan, Koehler, and Zeitouni (2020) showed that Glauber dynamics mixes rapidly for general Ising models so long as the difference between the largest and smallest eigenvalues of the coupling matrix is at most $1 - \epsilon$ for any fixed $\epsilon > 0$.
    We give evidence that Glauber dynamics is in fact optimal for this ``general-purpose sampling'' task.
    Namely, we give an average-case reduction from hypothesis testing in a Wishart negatively-spiked matrix model to approximately sampling from the Gibbs measure of a general Ising model for which the difference between the largest and smallest eigenvalues of the coupling matrix is at most $1 + \epsilon$ for any fixed $\epsilon > 0$.
    Combined with results of Bandeira, Kunisky, and Wein (2019) that analyze low-degree polynomial algorithms to give evidence for the hardness of the former spiked matrix problem, our results in turn give evidence for the hardness of general-purpose sampling improving on Glauber dynamics.
    We also give a similar reduction to approximating the free energy of general Ising models, and again infer evidence that simulated annealing algorithms based on Glauber dynamics are optimal in the general-purpose setting.
\end{abstract}

\clearpage

\setcounter{page}{1}
\pagestyle{plain}

\section{Introduction}

We study algorithmic questions involving the Gibbs measure of an Ising model with coupling matrix $\bJ \in \RR^{n \times n}_{\sym}$, which is the probability measure on $\{ \pm 1 \}^n$ having weights
\begin{equation}
    \label{eq:mu}
    \mu(\bx) = \mu_{\bJ}(\bx) \colonequals \frac{1}{Z}\exp\left(\frac{1}{2}\bx^{\top} \bJ \bx\right).
\end{equation}
Here, the \emph{partition function} $Z = Z(\bJ)$ is the quantity
\begin{equation}
    Z = Z(\bJ) \colonequals \sum_{\bx \in \{\pm 1\}^n} \exp\left(\frac{1}{2}\bx^{\top} \bJ \bx \right).
\end{equation}
We focus on two standard tasks associated to such a model: (1) approximately sampling from $\mu_{\bJ}$, and (2) approximating the partition function or, equivalently, the \emph{free energy} $\log Z(\bJ)$.

Existing algorithmic results for such problems may be separated into two categories.
On the one hand, one may try to produce algorithms for working with $\mu_{\bJ}$ for broad classes of $\bJ$, subject only to modest constraints on ``conditioning'' quantities like the range of eigenvalues or a measurement of the effective rank of $\bJ$.
This is the kind of algorithm we will study, which we call a \emph{general-purpose} algorithm.
It is worth mentioning, however, that one may on the other hand study either specific deterministic sequences of $\bJ$ or specific probability distributions of $\bJ$ (say, the adjacency matrix of a random graph, or the Gaussian orthogonal ensemble giving rise to the Sherrington-Kirkpatrick model).
In this case, one may tailor algorithms to the $\bJ$ in question and achieve stronger performance than general-purpose algorithms; see Section~\ref{sec:related} for examples.

In recent years, several interesting results have appeared on general-purpose algorithms for Ising models under spectral conditions on $\bJ$.
These include the following result on sampling from $\mu_{\bJ}$.

\begin{definition}
    A \emph{sampler} is a randomized algorithm that takes as input $\bJ \in \RR^{n \times n}_{\sym}$ and outputs a random $\bx \in \{\pm 1\}^n$.
    If we call $\samp$ such an algorithm, we write $\samp(\bJ)$ for the law of this random variable $\bx$.
\end{definition}

\begin{theorem}[\cite{EKZ-2020-SpectralConditionFastMixing,AJKPV-2021-EntropicIndependenceI}]
    \label{thm:glauber}
    Let $\delta, \epsilon > 0$.
    There is a sampler $\samp$ running in time polynomial in $n$, $\log(\delta^{-1})$, and $\epsilon^{-1}$ such that, for any $\bJ$ with $\lambda_{\max}(\bJ) - \lambda_{\min}(\bJ) \leq 1 - \epsilon$, we have $\TV(\samp(\bJ), \mu_{\bJ}) < \delta$.
\end{theorem}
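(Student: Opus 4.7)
My plan is to prove fast mixing of the single-site Glauber dynamics on $\{\pm 1\}^n$ whose stationary measure is $\mu_{\bJ}$, and to return (after suitably many steps) the current state as the sample. Since Glauber dynamics is reversible with respect to $\mu_{\bJ}$, it suffices to exhibit a modified log-Sobolev inequality (MLSI) with constant depending only on $\epsilon$; by the standard entropy method this gives total variation mixing in $O_\epsilon(n \log(n/\delta))$ steps, and each step is a coordinate update computable in $O(n)$ time.

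The main vehicle I would use is Eldan's \emph{stochastic localization}. Define a measure-valued martingale $(\mu_t)_{t \ge 0}$ by starting at $\mu_0 = \mu_{\bJ}$ and tilting with a vector-valued Brownian motion whose drift is $\bbm(\mu_t) := \Ex_{\mu_t}[\bx]$, so that at time $t$ the localized measure takes the form
\[
    \mu_t(\bx) \;\propto\; \exp\!\left(\tfrac{1}{2}\bx^{\top}\bJ\bx + \bh_t^{\top}\bx - \tfrac{t}{2}\|\bx\|^2\right)
\]
for some random $\bh_t$. On $\{\pm 1\}^n$ the quadratic term $\|\bx\|^2 = n$ is constant, so $\mu_t$ is again an Ising measure with coupling matrix $\bJ$ and an external field. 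The key quantity to control is $\|\Cov(\mu_t)\|_{\op}$ uniformly in $t \in [0,1]$: one verifies by an SDE computation that
\[
    \frac{d}{dt}\Ex\bigl[\mathrm{Ent}_{\mu_t}(f)\bigr] \;\le\; -\,\Ex\bigl[\mathcal{E}_{\mu_t}^{\mathrm{Glauber}}(f)\bigr]
\]
after bounding the driving covariance in operator norm, which yields the MLSI for $\mu_0$ with the desired constant.

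The heart of the argument, and the step where the spectral hypothesis enters, is the covariance bound
\[
    \|\Cov(\mu_t)\|_{\op} \;\le\; \frac{1}{\epsilon}
    \qquad \text{for all } t \in [0,1].
\]
Because $\bx^{\top}\bx = n$ is constant on $\{\pm 1\}^n$, we may shift $\bJ \mapsto \bJ - \lambda_{\max}(\bJ)\bI$ without changing $\mu_{\bJ}$, so the hypothesis may be rewritten as $-(1-\epsilon)\bI \preceq \bJ \preceq 0$. For such $\bJ$ (and any linear field $\bh$) I would establish the bound by a convexity/Brascamp--Lieb style argument: on the relaxed measure on $[-1,1]^n$ one has strong log-concavity with constant $\epsilon$, and a comparison inequality with the Bernoulli marginals (e.g., as in the Bauerschmidt--Bodineau or Eldan--Koehler--Zeitouni treatments) transfers the Gaussian covariance bound $\epsilon^{-1}$ to the discrete model.

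The hard part will be making the transfer from log-concavity of the continuous relaxation to the discrete covariance bound clean and quantitative enough to feed into the localization estimate; equivalently, one must show that tilting by the time-$t$ field $\bh_t$ preserves the spectral hypothesis (the effective coupling matrix along the flow is still $\bJ$, so this reduces to verifying that the covariance bound holds uniformly over external fields, not just in expectation). Once the uniform covariance bound is in hand, the MLSI, rapid mixing, and the claimed polynomial-time sampler follow by now-standard arguments about entropy contraction for Glauber dynamics.
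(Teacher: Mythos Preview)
The paper does not prove this theorem; it is quoted as prior work from \cite{EKZ-2020-SpectralConditionFastMixing,AJKPV-2021-EntropicIndependenceI} and used only as background for the lower bounds that the paper actually establishes. So there is no in-paper proof to compare your proposal against.

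That said, your sketch is broadly faithful to how those cited works proceed, with one conflation worth flagging. Stochastic localization together with the uniform covariance bound $\|\Cov(\mu_t)\|_{\op} \le 1/\epsilon$ is indeed the engine of \cite{EKZ-2020-SpectralConditionFastMixing}, but what they extract from it is a \emph{Poincar\'e} inequality (spectral gap) for Glauber dynamics, not an MLSI; this yields polynomial mixing, but with $\mathrm{poly}(\delta^{-1})$ rather than $\log(\delta^{-1})$ dependence. The $\log(\delta^{-1})$ runtime stated in the theorem comes from \cite{AJKPV-2021-EntropicIndependenceI}, who obtain the MLSI not via stochastic localization directly but through their \emph{entropic independence} framework, leveraging the spectral/entropic independence that the EKZ covariance bound implies. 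Your write-up blends the two: you invoke stochastic localization but claim it delivers an MLSI with entropy contraction. Later works (e.g., Chen--Eldan) do close this gap and derive MLSI via localization, but if you mean to reconstruct the theorem as cited, you should either (i) run the EKZ argument to the Poincar\'e inequality and then invoke entropic independence to upgrade to MLSI, or (ii) be explicit that you are using a localization-to-MLSI argument that postdates EKZ. The covariance bound itself, and your remark that the external field along the localization path does not alter the coupling matrix, are correct and are exactly where the hypothesis $\lambda_{\max}(\bJ)-\lambda_{\min}(\bJ)\le 1-\epsilon$ enters.
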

\noindent
Here, $\TV(\cdot\,, \cdot)$ is the total variation distance.
The specific sampling algorithm here is just the familiar (discrete-time) \emph{Glauber dynamics}.\footnote{We do not define these dynamics and some basic related terminology here; see~\cite{LP-2017-MarkovChainsMixingTimes} for exposition.}
An algorithm with this kind of performance, without taking into account the dependence on $\epsilon$, is essentially the same as what is called a \emph{fully polynomial almost uniform sampler (FPAUS)} for combinatorial problems, except that the ``target measure'' is not the uniform measure over a complicated set (e.g., matchings or colorings in a graph), but rather a complicated measure $\mu_{\bJ}$ over the simple set $\{\pm 1\}^n$.

The well-known reduction ``from integrating to sampling'' allows us to use an efficient sampler like this to also approximate the free energy.
(For us, ``integrating'' means approximating the partition function or free energy; the reduction ``from counting to sampling'' appearing in the combinatorial setting is also very similar.)
\begin{definition}
    An \emph{approximator} is a randomized algorithm that takes as input $\bJ \in \RR^{n \times n}_{\sym}$ and outputs a random real number.
    If we call $\aapprox$ such an algorithm, we write $\aapprox(\bJ)$ for the law of this random variable.
\end{definition}
\begin{corollary}
    \label{cor:free-energy-estimate}
    Let $\delta, \epsilon > 0$.
    There is an approximator $\aapprox$ computable in polynomial time in $n$, $\delta^{-1}$, and $\epsilon^{-1}$ such that, for any $\bJ$ with $\lambda_{\max}(\bJ) - \lambda_{\min}(\bJ) \leq 1 - \epsilon$, we have with probability $1 - o(1)$ when $F \sim \aapprox(\bJ)$ that
    \begin{equation}
        \label{eq:approximator-guarantee}
        \left|\, \log Z(\bJ) - F \, \right| < \delta.
    \end{equation}
\end{corollary}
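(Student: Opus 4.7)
The plan is to apply the standard ``integration to sampling'' reduction: approximate $\log Z(\bJ)$ by numerically integrating $\tfrac{d}{dt}\log Z(t\bJ)$ along the path $t \in [0,1]$, with each derivative estimated by Monte Carlo averages produced by the sampler of Theorem~\ref{thm:glauber}.

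A preliminary shift lets us control the size of the quadratic form. Since $\bx^\top \bx = n$ on $\{\pm 1\}^n$, replacing $\bJ$ by $\bJ' \colonequals \bJ - c\bI$ with $c \colonequals (\lambda_{\max}(\bJ) + \lambda_{\min}(\bJ))/2$ leaves $\mu_{\bJ}$ invariant, shifts $\log Z(\bJ)$ by the explicit quantity $-cn/2$, preserves the spectral gap, and achieves $\|\bJ'\|_{\mathrm{op}} \leq (1-\epsilon)/2$. Setting $\phi(t) \colonequals \log Z(t\bJ')$, differentiation under the finite sum gives
\begin{equation}
    \phi'(t) = \tfrac{1}{2}\EE_{\bx \sim \mu_{t\bJ'}}[\bx^\top \bJ' \bx], \qquad \phi''(t) = \tfrac{1}{4}\Var_{\bx \sim \mu_{t\bJ'}}(\bx^\top \bJ' \bx),
\end{equation}
with $\phi(0) = n\log 2$. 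Using $|\bx^\top \bJ' \bx| \leq n\|\bJ'\|_{\mathrm{op}} \leq n/2$ deterministically yields $\|\phi'\|_\infty, \|\phi''\|_\infty = O(n^2)$, and every $t\bJ'$ for $t \in [0,1]$ inherits the spectral gap $\leq 1-\epsilon$, making Theorem~\ref{thm:glauber} applicable all along the path.

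The approximator then proceeds as follows: fix a midpoint-rule grid $t_i \colonequals (i - 1/2)/K$ with $K = \Theta(n/\sqrt{\delta})$; at each $t_i$, call Theorem~\ref{thm:glauber} with TV tolerance $\delta' = \delta/n$ to draw $m = \Theta(n^2 \log(K/\delta)/\delta^2)$ independent approximate samples $\bx^{(i,j)} \sim \samp(t_i \bJ')$; form $\hat\phi_i \colonequals \frac{1}{2m}\sum_j (\bx^{(i,j)})^\top \bJ' \bx^{(i,j)}$; and output
\begin{equation}
    F \colonequals n\log 2 + \frac{cn}{2} + \frac{1}{K}\sum_{i=1}^K \hat\phi_i.
\end{equation}
The error $|\log Z(\bJ) - F|$ decomposes as (i) a midpoint-rule discretization error $O(\|\phi''\|_\infty/K^2) = O(n^2/K^2)$; (ii) a TV-induced bias of at most $O(\delta' n)$ per grid point, hence also in the average; (iii) a Monte Carlo fluctuation $O(n\sqrt{\log(K/\delta)/m})$ per grid point, controlled by Hoeffding's inequality together with a union bound over the $K$ grid points. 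Each piece is $O(\delta)$ under the chosen parameters, and rescaling constants forces the total below $\delta$ with probability $1 - o(1)$. The runtime is polynomial in $n$, $\delta^{-1}$, and $\epsilon^{-1}$ since $Km$ is polynomial in the first two and Theorem~\ref{thm:glauber} depends only polylogarithmically on $\delta'^{-1}$. The only step requiring genuine thought is the preliminary shift, which is what yields the $O(n^2)$ bound on $\phi''$ and therefore the polynomial rate of convergence of the Riemann sum; everything else is bookkeeping standard to partition function estimation.
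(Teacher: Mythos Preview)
Your proof is correct and is precisely the kind of simulated annealing / thermodynamic integration argument the paper has in mind; the paper does not spell out a proof but simply cites Lemma~C.4 of \cite{KLR-2022-SamplingLowRankIsing} and the standard counting-to-sampling references \cite{Jerrum-2003-CountingSamplingIntegrating,JSV-2004-PolynomialApproximationPermanent,BSVV-2008-AcceleratingSimulatedAnnealingCounting,SVV-2009-AdaptiveSamplingCounting}. Those references typically phrase the reduction via the telescoping product $Z(\bJ') = 2^n \prod_i Z(t_i\bJ')/Z(t_{i-1}\bJ')$ with each ratio estimated as an expectation under $\mu_{t_{i-1}\bJ'}$, whereas you integrate $\phi'(t)$ directly; the two are equivalent after taking logarithms, and your preliminary centering of $\bJ$ is exactly the device that makes the number of annealing steps polynomial in either formulation.
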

\noindent
This algorithm uses the sampler from Theorem~\ref{thm:glauber} together with \emph{simulated annealing} to approximate the free energy.
See Lemma C.4 of~\cite{KLR-2022-SamplingLowRankIsing} for this argument in our Ising model setting, as well as~\cite{Jerrum-2003-CountingSamplingIntegrating,JSV-2004-PolynomialApproximationPermanent,BSVV-2008-AcceleratingSimulatedAnnealingCounting,SVV-2009-AdaptiveSamplingCounting} for similar arguments.\footnote{Usually these reductions produce an approximator satisfying \eqref{eq:approximator-guarantee} with some fixed probability greater than $\frac{1}{2}$, but by the standard trick of taking the median of the outputs of a number of independent runs growing with $n$ we may boost this to probability $1 - o(1)$.}
Such an algorithm, again not taking into account the role of $\epsilon$, is called a \emph{fully polynomial randomized approximation scheme (FPRAS)} for the free energy; quite general notions of equivalence between FPRAS for counting problems or partition functions and FPAUS for associated probability distributions are discussed in the above references.

While the special constant 1 appearing in the conditions on $\bJ$ above that read
\begin{equation}
    \text{`` } \lambda_{\max}(\bJ) - \lambda_{\min}(\bJ) \leq 1 - \epsilon \text{ ''}
\end{equation}
appears naturally in the calculations of~\cite{EKZ-2020-SpectralConditionFastMixing} as well as the similar prior work of~\cite{BB-2019-VerySimpleLSI}, and is optimal for Glauber dynamics according to analysis of the mean-field Curie-Weiss model~\cite{GWL-1966-RelaxationTimeCurieWeiss,LLP-2010-GlauberDynamicsCurieWeiss}, we are not aware of any formal evidence that 1 cannot be replaced with a larger constant in these conditions for a different sampling algorithm.
The goal of this paper is to give evidence that 1 is in fact the optimal constant in such a condition for general-purpose sampling or general-purpose free energy approximation, for arbitrary efficient samplers and approximators.

We will relate samplers or approximators that improve on this constant to hypothesis tests between the following pair of distributions.

\begin{definition}[Wishart spiked matrix model~\cite{Johnstone-2001-LargestEigenvaluePCA}]
    Let $\beta \in [-1, \infty)$, $\gamma > 0$, and, for $n \in \NN$, set $N = N(n) \colonequals \lceil n / \gamma \rceil$.
    We define two probability measures over $(\RR^n)^N$ by the following sampling procedures:
    \begin{enumerate}
    \item Under $\QQ$, draw $\by_1, \dots, \by_N \sim \sN(\bm 0, \bm I_n)$ independently.
    \item Under $\PP$, first draw $\bx \sim \Unif(\{\pm 1 \}^n)$, then draw $\by_1, \dots, \by_N \sim \sN(\bm 0, \bm I_n + \frac{\beta}{n} \bx\bx^{\top})$ independently.
    \end{enumerate}
    Taken together, $\PP$ and $\QQ$ form the \emph{spiked Wishart model} $(\PP, \QQ) \equalscolon \mathsf{Wishart}(n, \beta, \gamma)$, and we write $\mathsf{Wishart}(\beta, \gamma)$ for the sequence of these pairs over $n \in \NN$.
\end{definition}
\noindent
One usually allows the ``prior'' law $\Unif(\{\pm 1\}^n)$ to vary, but we will only be interested in this specific law for our purposes.

The following is the key conjecture based on which we will surmise that our reductions give evidence of hardness of sampling and free energy approximation.
\begin{conjecture}[Hardness of Wishart model~\cite{BKW-2019-ConstrainedPCA}]
    \label{conj:wishart}
    If $\beta > -1$ and $\beta^2 < \gamma$, then there is no polynomial time hypothesis test between the distributions of $\mathsf{Wishart}(\beta, \gamma)$.
    That is, there exists no randomized algorithm $f: (\RR^n)^N \to \{\mathtt{p}, \mathtt{q}\}$ that runs in polynomial time in $n$ and that has
    \begin{equation}
        \lim_{n \to \infty} \PP[f(\by_1, \dots, \by_N) = \mathtt{p}] = \lim_{n \to \infty} \QQ[f(\by_1, \dots, \by_N) = \mathtt{q}\hspace{0.1em}] = 1.
    \end{equation}
\end{conjecture}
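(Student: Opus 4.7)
The stated claim is a computational hardness conjecture; a genuine unconditional proof is out of reach, since it would imply separations from $\mathsf{BPP}$. The standard strategy, due to~\cite{BKW-2019-ConstrainedPCA}, is to give evidence via the \emph{low-degree polynomial method}: show that no polynomial of degree $D = \omega(\log n)$ in the observations $(\by_1, \dots, \by_N)$ can distinguish $\PP$ from $\QQ$ with vanishing error. The plan is thus to bound the second moment of the degree-$D$ truncation $L^{\leq D}$ of the likelihood ratio $L \colonequals d\PP/d\QQ$ in $L^2(\QQ)$, and to verify that the threshold for boundedness coincides exactly with $\beta^2 = \gamma$.

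First, I would derive an explicit expression for $L$. Conditional on the planted spike $\bx \in \{\pm 1\}^n$, since $\Sigma = \bI_n + (\beta/n)\bx\bx^\top$ is a rank-one perturbation with $\|\bx\|^2 = n$, the matrix determinant lemma and Sherman--Morrison formula yield
\begin{equation}
\frac{d\PP(\,\cdot \mid \bx)}{d\QQ}(\by_1, \dots, \by_N) = (1 + \beta)^{-N/2} \exp\!\left( \frac{\beta}{2n(1+\beta)} \sum_{i=1}^{N} (\bx^{\top} \by_i)^2 \right),
\end{equation}
and $L$ is the expectation of this over $\bx \sim \Unif(\{\pm 1\}^n)$. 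Next, I would compute $\|L^{\leq D}\|_{L^2(\QQ)}^2$ via the standard two-replica trick: introducing independent copies $\bx, \bx'$ of the spike and expanding each conditional likelihood in the Hermite basis for $\QQ$, orthogonality collapses the second moment to an expectation over $(\bx, \bx')$ of a truncated generating function in the overlap $u \colonequals \langle \bx, \bx' \rangle / n$, of the schematic form
\begin{equation}
\|L^{\leq D}\|_{L^2(\QQ)}^2 \;=\; \EE_{\bx, \bx'}\!\left[ \sum_{k=0}^{D} c_{N, k} \cdot \left(\frac{\beta^2}{(1+\beta)^2}\right)^{k} u^{2k} \right]
\end{equation}
with combinatorial coefficients $c_{N, k}$ polynomial in $k$ and scaling with $N \asymp n/\gamma$. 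Since $\sqrt{n}\, u$ is sub-Gaussian with variance $1$, summing the truncated generating function and comparing against the Gaussian moment generating function shows that boundedness as $n \to \infty$ reduces precisely to the condition $\beta^2 < \gamma$.

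The main obstacle is controlling the tails of the overlap $u$ at low degree: the generating-function calculation only captures the Gaussian bulk, while atypically large values of $|u|$ could in principle contribute significantly after being raised to the power $2D$. The degree cutoff $D = O(\log n)$ is essential here, and~\cite{BKW-2019-ConstrainedPCA} dispatch this obstacle via a hypercontractive estimate that effectively restricts the relevant support of $u$ to scale $O(\sqrt{\log n / n})$. Once this is in hand, boundedness of $\|L^{\leq D}\|_{L^2(\QQ)}$ for all $D \leq (\log n)^{C}$, together with the low-degree heuristic, supplies the evidence of polynomial-time indistinguishability asserted by the conjecture.
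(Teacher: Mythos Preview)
You have correctly identified the essential point: the statement is a \emph{conjecture}, and the paper does not prove it. There is no ``paper's own proof'' to compare against; the paper states Conjecture~\ref{conj:wishart} as an assumption underlying its reductions and merely points to~\cite{BKW-2019-ConstrainedPCA} for the low-degree polynomial evidence supporting it. Your proposal to sketch that evidence---computing the low-degree likelihood ratio via the two-replica overlap and showing $\|L^{\leq D}\|_{L^2(\QQ)}$ stays bounded precisely when $\beta^2 < \gamma$---is exactly the approach of~\cite{BKW-2019-ConstrainedPCA} that the paper cites, so in spirit you are aligned with what the paper invokes.

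A couple of technical imprecisions are worth flagging. First, your degree bounds are inconsistent: you write ``no polynomial of degree $D = \omega(\log n)$'' and later ``the degree cutoff $D = O(\log n)$ is essential,'' whereas the actual result of~\cite{BKW-2019-ConstrainedPCA} (as the paper notes) rules out degrees up to $o(n/\log n)$, which is considerably stronger than $(\log n)^C$ and is what justifies treating the low-degree prediction as evidence against \emph{polynomial-time} algorithms rather than merely near-constant-degree ones. Second, the tail control of the overlap in~\cite{BKW-2019-ConstrainedPCA} is handled by direct moment bounds on the Rademacher overlap rather than a hypercontractive inequality per se. These are refinements, not gaps: your recognition that the statement admits no proof and your outline of the supporting evidence are both sound.
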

\noindent
The main evidence for this conjecture at this level of generality comes from the analysis of \emph{low-degree polynomial algorithms}.
That specific evidence is given in~\cite{BKW-2019-ConstrainedPCA}; see also~\cite{Hopkins-2018-Thesis, KWB-2022-LowDegreeNotes} for general surveys and~\cite{MW-2021-PlantedVectorSubspace} for similar analysis of other priors.
Roughly speaking,~\cite{BKW-2019-ConstrainedPCA} show that no polynomial of degree $o(n / \log(n))$ can successfully perform hypothesis testing in the setting of Conjecture~\ref{conj:wishart} (albeit when the high-probability notion of success above is replaced with a softer condition involving expectations and variances of a polynomial).

One may also view the older analysis of the \emph{Baik--Ben Arous--P\'{e}ch\'{e}~(BBP) transition}~\cite{BBAP-2005-LargestEigenvalueSampleCovariance} in random matrix theory (observed empirically by Johnstone in the same work that introduced the spiked matrix model~\cite{Johnstone-2001-LargestEigenvaluePCA}) as some evidence of hardness.
This result shows that, if $\beta > 0$, then once $\beta^2 < \gamma$ the specific natural test statistic $\lambda_{\max}(\frac{1}{N}\sum_{i = 1}^N \by_i\by_i^{\top})$, the largest eigenvalue of the sample covariance, fails to distinguish $\PP$ from $\QQ$ (in the above high-probability sense).
For the \emph{negatively-spiked} case $\beta < 0$, the natural test statistic is instead $\lambda_{\min}(\frac{1}{N}\sum_{i = 1}^N \by_i\by_i^{\top})$, and again when $\beta^2 < \gamma$ then this fails to distinguish $\PP$ from~$\QQ$~\cite{BS-2006-EigenvaluesSampleCovariance}.
Our reduction will rely on the latter negatively-spiked case, and we will work only with $\gamma > 1$, which always falls in the computationally-hard regime since we will have $\beta \in (-1, 0)$, as we must for the negatively-spiked model to be defined at all.

Finally, note that the assumption $\beta > -1$ is necessary; otherwise, the $\by_i$ are exactly orthogonal to $\bx$ under $\PP$, in which case lattice-based algorithms can recover $\bx$~\cite{ZSWB-2022-LatticeMethodsSOS,DK-2022-ComponentAnalysisLatticeBasis}.
Our reduction will work in this regime, so this nuance will not cause difficulties.

\subsection{Main Results}

We are now prepared to state our main result.
\begin{theorem}
    \label{thm:free-energy}
    For any $\epsilon > 0$, there exists $\delta > 0$ such that the following holds.
    Suppose that there is an approximator $\aapprox$ that runs in polynomial time in $n$ and such that, for any $\bJ$ with $\lambda_{\max}(\bJ) - \lambda_{\min}(\bJ) \leq 1 + \epsilon$, we have with probability $1 - o(1)$ that, when $F \sim \aapprox(\bJ)$,
    \begin{equation}
        \left| \, \log Z(\bJ) - F \, \right| < \delta n.
    \end{equation}
    Then, Conjecture~\ref{conj:wishart} is false.
\end{theorem}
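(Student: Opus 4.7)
The plan is to produce, for any $\epsilon > 0$, a polynomial-time randomized reduction that, given Wishart samples $\by = (\by_1, \ldots, \by_N)$, outputs a coupling matrix $\bJ(\by) \in \RR^{n \times n}_{\sym}$ satisfying: (a) the spectral constraint $\lambda_{\max}(\bJ) - \lambda_{\min}(\bJ) \leq 1 + \epsilon$ holds with probability $1 - o(1)$ under both $\PP$ and $\QQ$; and (b) the laws of $\log Z(\bJ(\by))$ under $\PP$ and $\QQ$ have typical values separated by at least $4\delta n$ for some $\delta = \delta(\epsilon) > 0$, with $o(n)$-scale concentration around their means. Given such a reduction, running $\aapprox$ on $\bJ(\by)$ and thresholding the output $F$ against the deterministic midpoint of the two expectations is a polynomial-time test that distinguishes $\PP$ from $\QQ$ with probability $1 - o(1)$, contradicting Conjecture~\ref{conj:wishart}.

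The natural candidate is $\bJ(\by) = -\alpha \bW$ with sample covariance $\bW = N^{-1} \sum_i \by_i \by_i^\top$. Picking $\gamma > 0$ small enough (depending on $\epsilon$) and $\alpha = (1+\epsilon/2)/(4\sqrt{\gamma})$, Marchenko--Pastur together with Bai--Yin tail control yield $\lambda_{\max}(\bW) - \lambda_{\min}(\bW) = 4\sqrt{\gamma} + o(1)$ with probability $1 - o(1)$ under both $\PP$ and $\QQ$ (since the subcritical negative spike in the hard regime $|\beta| < \sqrt{\gamma}$ does not escape the bulk), so (a) holds after routing the $o(1)$-probability spectral bad event to a trivial output. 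A direct Gaussian density calculation then produces the identification
\begin{equation*}
    \log Z(\bJ(\by)) = n \log 2 + \tfrac{N}{2} \log(1 - c) + \log L_c(\by),
\end{equation*}
where $c = \alpha \gamma / (1 + \alpha \gamma) \in (0, \sqrt{\gamma})$ and $L_c(\by) = d\mathsf{Wishart}(n, -c, \gamma)/d\QQ$, i.e.\ up to an explicit deterministic constant, $\log Z(\bJ(\by))$ is the log-likelihood ratio for a negatively-spiked Wishart instance of the same type as Conjecture~\ref{conj:wishart}. Concentration at scale $o(n)$ then follows by applying the Gaussian log-Sobolev inequality to the Lipschitz functional $\by \mapsto \log Z(-\alpha\bW(\by))$.

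The main obstacle is establishing (b), namely the $\Omega(n)$ lower bound on
\begin{equation*}
    \EE_\PP[\log L_c] - \EE_\QQ[\log L_c] = D(\PP \,\|\, \QQ) + D(\QQ \,\|\, \mathsf{Wishart}(n, -c, \gamma)) - D(\PP \,\|\, \mathsf{Wishart}(n, -c, \gamma)).
\end{equation*}
An annealed Gaussian-KL computation only delivers an $O(1)$ bound in the hard regime, which is insufficient; this is consistent with the bounded second moment of the Wishart likelihood ratio that already underlies the low-degree polynomial evidence for Conjecture~\ref{conj:wishart}. I would therefore work at the quenched level, producing a linear-in-$n$ free energy gap either via a Gibbs variational lower bound on $\log Z(\bJ)$ under the planted measure, with a test distribution tilted toward the planted $\bx$, or via an Aizenman--Sims--Starr-style cavity/Guerra interpolation between the spiked and null Wishart Ising models. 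The delicate point is that the spectral constraint forces the effective mean-field (Curie--Weiss) coupling $\alpha|\beta|$ to stay bounded by $(1+\epsilon)/4$, so for small $\epsilon$ no ferromagnetic phase transition can be invoked; the required $\Omega(n)$ gap must instead be teased out of subcritical second-order corrections with $\gamma$ and $c$ tuned carefully as functions of $\epsilon$ and $|\beta|$---the core technical challenge of the proof.
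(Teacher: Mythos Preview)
Your reduction via $\bJ = -\alpha\bW$ is elegant, and the identity $\log Z(\bJ(\by)) = n\log 2 + \tfrac{N}{2}\log(1-c) + \log L_c(\by)$ is correct. But the step you flag as the ``core technical challenge'' is not merely difficult; it is impossible in the parameter regime you are forced into, and this is where the proposal breaks down.

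The spectral constraint $\alpha \cdot 4\sqrt{\gamma} \leq 1+\epsilon$ together with the hard-regime constraint $|\beta| < \sqrt{\gamma}$ pins the effective ferromagnetic coupling at $\alpha|\beta| < (1+\epsilon)/4 < 1$, as you observe. This is not a nuisance to be overcome by second-order corrections; it is fatal. An exact annealed computation gives
\[
\frac{\Ex_{\PP}[\what{Z}(-\alpha\bW)]}{\Ex_{\QQ}[\what{Z}(-\alpha\bW)]} \;=\; \Ex_{\rho}\Big[\big(1 - \tfrac{\alpha\gamma|\beta|}{1+\alpha\gamma}\rho^2\big)^{-N/2}\Big] \;\longrightarrow\; (1 - \alpha|\beta|)^{-1/2},
\]
where $\rho = \bx_0^{\top}\bx/n$ is the overlap; this is precisely a Curie--Weiss partition function, and since $\alpha|\beta| < 1$ it is $O(1)$. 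So the annealed pressures agree to $O(1/n)$, and by Jensen the quenched pressure under $\PP$ is also at most the common annealed value plus $O(1/n)$. For your argument to produce an $\Omega(1)$ gap in $p(\bJ)$, you would therefore need the quenched pressure under $\QQ$ to sit $\Omega(1)$ below the annealed value \emph{while the quenched pressure under $\PP$ does not}. But under both $\PP$ and $\QQ$ (marginalizing the planted $\bx_0$), the law of $\bJ$ is orthogonally invariant with the \emph{same} limiting spectral distribution (the spike is subcritical and does not escape the Marchenko--Pastur bulk), so there is no mechanism for their quenched free energies to differ at leading order. Equivalently, your own identity says the free-energy gap equals $\Ex_{\PP}[\log L_c] - \Ex_{\QQ}[\log L_c]$ for a \emph{subcritical} likelihood ratio, which is $O(1)$.

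The paper takes a genuinely different route that sidesteps this obstruction. It works with $\gamma > 1$ close to $1$ (not $\gamma$ small) and sets $\bJ = \eta\,\bP$ where $\bP$ is the orthogonal projection onto the \emph{orthogonal complement} of $\mathrm{span}(\by_1,\dots,\by_N)$, so $\bP$ has rank $\lfloor (1-\gamma^{-1})n\rfloor$, a small fraction of $n$. The eigenvalues of $\bJ$ are exactly $\{0,\eta\}$ with $\eta = 1+\epsilon$. Under $\PP$ with $\beta$ close to $-1$, the planted $\bx$ lies nearly in the range of $\bP$, so $\bx^{\top}\bJ\bx/n \approx \eta$ is essentially the maximum possible; the effective Curie--Weiss coupling for the overlap variable is now $\eta > 1$, \emph{supercritical}, and a direct combinatorial bound (grouping hypercube vectors by overlap with $\bx$) yields $p(\eta\bP) \geq c_{\PP}(\eta)$ with high probability. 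Under $\QQ$, $\bP$ is orthogonally invariant, and Comets' spherical comparison (Lemmas~\ref{lem:spherical-dom} and~\ref{lem:spherical-bound}) gives $p(\eta\bP) \leq c_{\QQ}(\eta)$ with high probability after choosing $\gamma$ close enough to $1$. A short explicit calculation then shows $c_{\PP}(\eta) > c_{\QQ}(\eta)$ precisely when $\eta > 1$. The projection step is what amplifies the signal from subcritical to supercritical and makes the $\Omega(n)$ separation appear.
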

\noindent
We note that the factor of $n$ on the right-hand side means that our lower bound holds against approximators that are allowed to make a significantly larger error in approximating the free energy than the kind of approximation demanded by a FPRAS as in Corollary~\ref{cor:free-energy-estimate}.

The following analogous statement for sampling is then immediate from the Theorem by the same reduction from integration to sampling as discussed above for Corollary~\ref{cor:free-energy-estimate}.

\begin{corollary}
    \label{cor:sampling}
    Let $\epsilon > 0$.
    Suppose that, for any $\delta > 0$, there is a sampler $\samp$ that runs in polynomial time in $n$ and $\log(\delta^{-1})$ and such that, for any $\bJ$ with $\lambda_{\max}(\bJ) - \lambda_{\min}(\bJ) \leq 1 + \epsilon$, we have $\TV(\samp(\bJ), \mu_{\bJ}) < \delta$.
    Then, Conjecture~\ref{conj:wishart} is false.
\end{corollary}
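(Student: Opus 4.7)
Fix $\epsilon > 0$ and let $\delta > 0$ be the constant produced by Theorem~\ref{thm:free-energy} for this $\epsilon$. The plan is to convert any sampler $\samp$ satisfying the Corollary's hypothesis into an approximator $\aapprox$ satisfying the hypothesis of Theorem~\ref{thm:free-energy} with this $\delta$; invoking Theorem~\ref{thm:free-energy} would then contradict Conjecture~\ref{conj:wishart}. The tool is the standard simulated-annealing reduction from integration to sampling, in the spirit of Lemma~C.4 of~\cite{KLR-2022-SamplingLowRankIsing}.

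To build $\aapprox$ on an input $\bJ$, I would first reduce to the case $\lambda_{\min}(\bJ) = 0$ by subtracting $\lambda_{\min}(\bJ)\bI$, which shifts $\log Z(\bJ)$ by the exactly computable quantity $\lambda_{\min}(\bJ)\,n/2$ and preserves the spectral gap, so that $\|\bJ\|_{\mathrm{op}} \leq 1 + \epsilon$. Next, I would interpolate along $\bJ_t := t\bJ$ for a grid $0 = t_0 < t_1 < \cdots < t_k = 1$ with $k = \Theta(n)$, and use the telescoping identity
\begin{equation*}
\log Z(\bJ) \;=\; n\log 2 \;+\; \sum_{i=0}^{k-1} \log \frac{Z(\bJ_{t_{i+1}})}{Z(\bJ_{t_i})}, \qquad \frac{Z(\bJ_{t_{i+1}})}{Z(\bJ_{t_i})} \;=\; \Ex_{\bx \sim \mu_{\bJ_{t_i}}}\!\left[\exp\!\left(\tfrac{1}{2}\bx^{\top}(\bJ_{t_{i+1}} - \bJ_{t_i})\bx\right)\right].
\end{equation*}
The spectral gap condition is preserved along the entire path, so $\samp$ applies at every $\bJ_{t_i}$. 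The choice $k = \Theta(n)$ keeps $\|\bJ_{t_{i+1}} - \bJ_{t_i}\|_{\mathrm{op}} = O(1/n)$ and confines the integrand to an interval $[e^{-c}, e^{c}]$ for an absolute constant $c = c(\epsilon)$, using $\|\bx\|^2 = n$. I would estimate each expectation by an empirical average of $M$ approximate samples drawn by $\samp$ called with TV-tolerance $\delta'$; a Chernoff bound together with the standard coupling inequality for TV-close distributions controls the per-step relative error by $O(\delta' e^{c} + e^{c}/\sqrt{M})$ with high probability. Taking $M = \poly(n)$ and $\delta' = 1/\poly(n)$ sufficiently small, a union bound over the $\Theta(n)$ steps yields total additive error at most $\delta n$ in $\log Z(\bJ)$ with probability $1 - o(1)$; the runtime remains polynomial in $n$ because $\samp$ costs only $\poly(n, \log(1/\delta'))$ per call.

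The only place where care is needed is the error bookkeeping along the annealing path, propagating discretization, empirical-mean, and TV biases through $\Theta(n)$ steps. The $\delta n$ slack on the right-hand side of the desired bound is much more generous than the $O(\delta)$ accuracy demanded of a true FPRAS, so none of the intermediate estimates need be sharp, and no adaptive or polylogarithmic annealing schedule is required. The resulting $\aapprox$ then satisfies the hypothesis of Theorem~\ref{thm:free-energy}, which in turn refutes Conjecture~\ref{conj:wishart}, completing the plan.
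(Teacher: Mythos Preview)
Your proposal is correct and follows exactly the approach the paper indicates: the paper's entire argument is the one-line remark that ``if such a sampler exists, then by the same simulated annealing strategy cited earlier, we may use this sampler to produce an approximator satisfying the conditions of Theorem~\ref{thm:free-energy},'' deferring to Lemma~C.4 of~\cite{KLR-2022-SamplingLowRankIsing}. You have simply unpacked that reduction---the shift by $\lambda_{\min}(\bJ)\bm I$, the linear interpolation $\bJ_t = t\bJ$ (which preserves the spectral-gap hypothesis), the telescoping product of expectation ratios, and the Chernoff/TV error bookkeeping---all of which is standard and matches the cited references.
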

\noindent
In short, the proof is that if such a sampler exists, then by the same simulated annealing strategy cited earlier, we may use this sampler to produce an approximator satisfying the conditions of Theorem~\ref{thm:free-energy}.

Taken together, these results give the evidence we have promised: conditional on Conjecture~\ref{conj:wishart}, the constant 1 appearing in the conditions of both Theorem~\ref{thm:glauber} and Corollary~\ref{cor:free-energy-estimate} cannot be improved.

\subsection{Related Work}
\label{sec:related}

\paragraph{General-purpose algorithms for and analysis of Ising models}
There are numerous conditions in the literature for rapid mixing of Glauber dynamics for Ising models as well as other regularity conditions on the Gibbs measure.
One of the most widely studied is \emph{Dobrushin's uniqueness condition} \cite{Dobrushin-1968-RandomFieldRegularity}, which in this context asks that $\sum_{j = 1}^n |J_{ij}| < 1$ for all $i$.
This addresses some models of interest, but importantly does not treat dense spin glass models (in the temperature regimes where they undergo interesting phase transitions), where $J_{ij}$ is random and of typical order $\Theta(n^{-1/2})$, such as the much-studied Sherrington-Kirkpatrick~(SK) model \cite{SK-1975-SolvableModel, Parisi-1979-SK, Parisi-1980-SequenceApproxSK, Talagrand-2006-Parisi, Panchenko-2013-SK}.
Recently, \cite{BB-2019-VerySimpleLSI} proved a log-Sobolev inequality for $\mu_{\bJ}$ under a bound only on the difference between the largest and smallest eigenvalues of $\bJ$, a breakthrough result since it was the first of its kind to apply to the SK model.\footnote{Note that $\mu_{\bJ} = \mu_{\bJ + c\bm I}$ for any $c$. For this reason, some of the results we cite assume that $\bJ \succeq \bm 0$ and refer to $\|\bJ\|$ instead of $\lambda_{\max}(\bJ) - \lambda_{\min}(\bJ)$, but this is without loss of generality, and we do not make this transformation to avoid confusion about the generality of both our results and others'.}
As discussed by \cite{EKZ-2020-SpectralConditionFastMixing}, this still does not ensure rapid mixing of Glauber dynamics; to that end, they proved a suitable \Poincare\ inequality for $\mu_{\bJ}$ under the same condition.
Since then, several follow-up works have elaborated on the algorithmic consequences; we highlight the result of \cite{KLR-2022-SamplingLowRankIsing} which gives modified algorithms allowing for a small number of outlier eigenvalues in $\bJ$, and those of \cite{AJKPV-2021-EntropicIndependenceI}, who treat a more general class of random walks related to high-dimensional expanders and, by proving modified log-Sobolev inequalities for the associated measures, also sharpen the mixing time bound obtained by \cite{EKZ-2020-SpectralConditionFastMixing} for Glauber dynamics on Ising models.

\paragraph{Computational complexity of sampling}
A notable line of work \cite{Sly-2010-ComputationalTransitionUniquenessThreshold,SS-2012-HardnessPartitionFunction2Spin, GSV-2015-InapproximabilityAntiferromagneticTree,GSV-2016-HardnessPartitionAntiferromagneticIsing,GSVY-2016-FerromagneticPottsHardness} gave inapproximability results for partition functions of Ising, hard-core, and Potts models on $d$-regular graphs with temperature below a relevant transition point for the infinite $d$-ary tree.
For our setting, translating the notation of \cite{GSV-2016-HardnessPartitionAntiferromagneticIsing}, the anti-ferromagnetic Ising model corresponds to an Ising model in our notation where $\bJ = -\eta \bA$ for $\bA$ the adjacency matrix of a $d$-regular graph and $\eta > 0$.\footnote{In such a setting, $\eta^{-1}$ is referred to as the \emph{temperature}. To avoid confusion with the $\beta$ appearing in the Wishart spiked matrix model, we use $\eta$ for inverse temperatures.}
Their main result is that, for any fixed $d \geq 3$ and fixed $\eta > \frac{1}{2}\log(\frac{d}{d - 2}) = \frac{1}{2}\log(1 + \frac{2}{d - 2})$, unless $\mathsf{RP} = \mathsf{NP}$, there is no algorithm approximating the partition function of this model on a given input graph.\footnote{An anti-ferromagnetic Ising model on a graph $G = (V, E)$ whose Gibbs measure gives $\bx \in \{\pm 1\}^V$ a weight of $B^{\#\{\{i, j\} \in E: x_i = x_j\}}$ with $B < 1$ is equivalent to an Ising model in our notation with $\bJ = -\frac{\log(1/B)}{2} \bA$ for $\bA$ the adjacency matrix of $G$. $B = \frac{d - 2}{d}$ is the critical parameter associated to the infinite $d$-ary tree.}
Since $-d \leq \lambda_{\min}(\bA) < \lambda_{\max}(\bA) = d$ for any $d$-regular $\bA$ (with the inequality achieved by bipartite graphs), this gives us a stronger hardness result that general-purpose sampling is impossible over $\bJ$ with $\lambda_{\max}(\bJ) - \lambda_{\min}(\bJ) \leq d\log(1 + \frac{2}{d - 2}) + \epsilon$ for any $\epsilon > 0$.
This function of $d$ decreases monotonically to a limit of 2 as $d \to \infty$.
This implies, for instance, that Corollary~\ref{cor:sampling} holds with the constant 1 replaced by 2, but with the refutation of Conjecture~\ref{conj:wishart} replaced by the much stronger conclusion that $\mathsf{RP} = \mathsf{NP}$.\footnote{A version of Theorem~\ref{thm:free-energy} also follows, but only holds against approximators with stronger guarantees.}
We offer the following conjecture, which asserts that we can have the ``best of both worlds.''
\begin{conjecture}
    Theorem~\ref{thm:glauber} and Corollary~\ref{cor:sampling} hold with the refutation of Conjecture~\ref{conj:wishart} replaced by the conclusion $\mathsf{RP} = \mathsf{NP}$.
\end{conjecture}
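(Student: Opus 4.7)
The starting point is the chain of hardness results cited in the paper---Sly, Sly--Sun, and Galanis--\v{S}tefankovi\v{c}--Vigoda---which already yield, under the assumption $\mathsf{RP} \neq \mathsf{NP}$, hardness of general-purpose sampling and approximation whenever $\lambda_{\max}(\bJ) - \lambda_{\min}(\bJ) \leq d\log\bigl(1 + \tfrac{2}{d-2}\bigr) + \epsilon$, a bound that decreases monotonically to $2$ as $d \to \infty$. The factor of $2$ separating these results from the conjectured constant $1$ is produced entirely by the bipartite extremum $\lambda_{\min}(\bA) = -d$: at the Sly critical coupling $\eta_d \colonequals \tfrac{1}{2}\log\bigl(\tfrac{d}{d-2}\bigr)$, if one could instead use a $d$-regular graph attaining $\lambda_{\min}(\bA)$ within $o(1)$ of the Kesten--McKay edge $-2\sqrt{d-1}$, then $\bJ = -\eta_d \bA$ would have spectral gap $\eta_d(d + 2\sqrt{d-1}) \to 1$ as $d \to \infty$, and choosing $d$ large enough in $\epsilon$ would give spectral gap at most $1 + \epsilon$.

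My plan would accordingly be to extend the Sly/GSV-style hardness reduction from bipartite random $d$-regular graphs to non-bipartite near-Ramanujan $d$-regular graphs. The phase transition driving those reductions is the reconstruction/non-uniqueness threshold on the infinite $d$-ary tree; since the local neighborhood of a uniform random $d$-regular graph up to its logarithmic girth agrees with the tree, one expects the same threshold $\eta_d$ to govern the anti-ferromagnetic partition function on such non-bipartite graphs. The substantive step is to redo the second moment analysis of $\sum_{\bx} \exp(-\eta\, \bx^{\top} \bA \bx)$ without bipartite symmetry: identify the dominant clusters of configurations, show that they coincide with those predicted by the tree analysis, and control the parity-breaking contributions from odd cycles. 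Once this is in place, packaging the resulting hard instance into an input to a general-purpose sampler or free-energy approximator with spectral gap at most $1 + \epsilon$ is immediate, and the sampling half of the conjecture follows from the free-energy half by the same simulated-annealing reduction used to derive Corollary~\ref{cor:sampling} from Theorem~\ref{thm:free-energy}.

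The main obstacle I foresee is exactly this non-bipartite second moment computation: existing bipartite arguments use the $\ZZ/2$ bipartition to cleanly identify and decouple the two planted ferromagnetic ground states of the anti-ferromagnetic model, and without it the interaction between the symmetry classes $\{\bx, -\bx\}$ has to be controlled by tree-like belief-propagation arguments on non-bipartite local neighborhoods, which I expect to require genuinely new work in the hardness-of-approximation literature. A softer intermediate target is to settle for any constant strictly less than $2$, which might already be accessible from graph families interpolating between bipartite and random-regular. A more ambitious alternative is a direct worst-case to average-case reduction from Sly/GSV instances to the Wishart testing problem of Conjecture~\ref{conj:wishart}, after which Theorem~\ref{thm:free-energy} would finish the argument as a black box; this route would itself require substantially new reduction technology for spectrally structured Ising couplings.
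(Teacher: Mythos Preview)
This statement is a \emph{conjecture} in the paper, explicitly left open; the paper offers no proof and in fact remarks that it ``seems to require substantially novel techniques,'' since the existing graph-based hardness results and the paper's own spin-glass-style argument live in different worlds. So there is no paper proof to compare against; I can only assess your plan on its own merits.

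Your diagnosis of the factor of $2$ is correct: it comes from the bipartite extremum $\lambda_{\min}(\bA) = -d$, and if hard instances could be taken near-Ramanujan then $\eta_d\,(d + 2\sqrt{d-1}) \to 1$ would indeed close the gap. But the proposal underestimates the structural role bipartiteness plays in the Sly/GSV arguments. Those are gadget reductions (ultimately from \textsf{MaxCut}-type problems): the hard instances are assembled from bipartite pieces whose anti-ferromagnetic Gibbs measure, above the tree threshold, has two well-separated dominant phases corresponding to the two colorings of the bipartition, and a bit is encoded by which phase a gadget occupies. On a non-bipartite $d$-regular graph the anti-ferromagnetic model is \emph{frustrated}; one expects spin-glass-like complexity rather than ferromagnetic-style two-state coexistence, and there is no evident analogue of ``encode a bit by choosing a phase.'' Your second-moment worry is real but secondary to this conceptual gap: even granting a sharp understanding of $Z$ on random non-bipartite regular graphs, you would still need a gadget architecture whose \emph{composite output} is near-Ramanujan, and gadget constructions generically create spectral outliers precisely because the gadgets are repeated structured pieces glued along low-degree interfaces. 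So the plan does not yet contain the key missing idea---a bit-encoding mechanism for frustrated models that is compatible with a Ramanujan-type spectral constraint---and without it the route does not obviously converge. Your alternative of a worst-to-average reduction into the Wishart testing problem faces a similarly sized missing ingredient: such a reduction would have to carry sparse $\{0,-\eta\}$-valued couplings to dense near-projection matrices while preserving an $\Omega(n)$ free-energy gap, and nothing in the current toolkit does that.
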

\noindent
We note that the results cited above depend deeply on the graph setting, while our argument uses quite different coupling matrices more closely related to the theory of dense spin glasses, so this Conjecture seems to require substantially novel techniques.

Another lower bound quite similar to ours, based on a reduction to sampling from a suitable random approximation problem, is that of \cite{AA-2011-ComplexityLinearOptics}, though it pertains to a very different setting motivated by quantum computing.
As in our case, that result (their Theorem~3) appeals to an underlying hardness conjecture, but allows the conclusion of strong hardness results against arbitrary sampling algorithms.

\paragraph{Lower bounds against specific samplers}
Most arguments for hardness of sampling that we are aware of concern Markov Chain Monte Carlo (MCMC) methods and study only a specific choice of dynamics (e.g., Glauber dynamics in our setting).
Then, it is possible to show slow mixing by analyzing geometric properties such as ``bottlenecks'' in the configuration space that the Markov chain navigates.
See, e.g., Chapter~7 and Section~15.6 of \cite{LP-2017-MarkovChainsMixingTimes} for some examples.
In particular, such results are known for Glauber dynamics run on the \emph{Curie-Weiss model}, which in our notation is an Ising model with $\bJ = \frac{\eta}{n}\one\one^{\top}$ for some $\eta > 0$.
In this case, it is known that Glauber dynamics mixes rapidly if $\eta < 1$ and slowly if $\eta > 1$ \cite{GWL-1966-RelaxationTimeCurieWeiss,LLP-2010-GlauberDynamicsCurieWeiss}, which is precisely compatible with our result since $\lambda_{\max}(\bJ) - \lambda_{\min}(\bJ) = \eta$.
Our result additionally gives evidence that no \emph{other} sampler can improve on this threshold.
Another interesting idea, albeit one relying heavily on the special spin glass phenomenon of \emph{disorder chaos}, is used by \cite{EAMS-2022-SamplingSKStochasticLocalization} to rule out a broad class of ``stable'' sampling algorithms for the specific case of the SK model at sufficiently low temperature.

\paragraph{Sherrington-Kirkpatrick (SK) model}
The SK model has $\bJ = \eta \bW$ for $\eta > 0$ and $\bW$ drawn from the Gaussian orthogonal ensemble (GOE), i.e., with independent entries on and above the diagonal drawn as $W_{ij} = W_{ji} \sim \sN(0, (1 + \One\{i = j\})n^{-1})$.
Per classical random matrix theory \cite{AGZ-2010-RandomMatrices}, such $\bW$ has $\lambda_{\max}(\bW)  = 2 + o(1)$ and $\lambda_{\min}(\bW) = -2 - o(1)$ with high probability, so the results of \cite{EKZ-2020-SpectralConditionFastMixing,AJKPV-2021-EntropicIndependenceI} imply rapid mixing of Glauber dynamics for $\eta < \frac{1}{4}$.
Meanwhile, it was long conjectured in the physics literature that Glauber dynamics should mix rapidly for all $\eta < 1$, the threshold of \emph{replica symmetry breaking (RSB)}, a geometric property of $\mu_{\bJ}$ expected to prevent rapid mixing \cite{SZ-1981-DynamicTheorySpinGlass,MPV-1987-SpinGlassTheoryBeyond}.
By designing a sampler tailored to the SK model, \cite{EAMS-2022-SamplingSKStochasticLocalization} showed that it is possible to sample efficiently for $\eta < \frac{1}{2}$, and, soon after, \cite{Celentano-2022-SudakovFerniquePostAMP} closed the remaining gap and showed that the same algorithm succeeds for all $\eta < 1$.\footnote{However, the results of \cite{EAMS-2022-SamplingSKStochasticLocalization,Celentano-2022-SudakovFerniquePostAMP} only give sampling guarantees in Wasserstein distance with respect to the $\ell^2$ metric, a weaker metric than the total variation distance appearing in Theorem~\ref{thm:glauber}.}

We emphasize the somewhat subtle state of affairs: there is good reason to believe that Glauber dynamics, not just the more complicated sampler of \cite{EAMS-2022-SamplingSKStochasticLocalization}, mixes rapidly for the SK model for all $\eta < 1$.
However, as we show in this paper, there is also reason to believe that this is \emph{not} because the spectral condition of \cite{EKZ-2020-SpectralConditionFastMixing} is suboptimal, and thus showing this rapid mixing would require analysis relying on further properties specific to the coupling matrix $\bJ$ of the SK model.

\paragraph{Spin glasses and replica symmetry}
Our main argument in Section~\ref{sec:proof} will proceed by comparing two disordered Ising models coming from two different laws for the random coupling matrix $\bJ$: on the one hand we will consider an orthogonally invariant random matrix $\bJ$, in fact just a rescaled projection matrix to a uniformly random low-dimensional subspace, and on the other a rescaled projection whose row space is biased towards a hypercube vector.
These distributions and their connections to low-degree polynomial algorithms are borrowed from the prior work of \cite{BKW-2019-ConstrainedPCA}; see the discussion below.
Our analysis of the former distribution is related to a long line of work in probability theory and statistical physics on the behavior of Ising models with general orthogonally invariant couplings  \cite{MPR-1994-NonRandomSpinGlassROM,PP-1995-TAPEquationsOrthogonalIsing,OW-2001-TAPEquationsOrthogonalIsing,CDL-2003-OrthogonallyInvariantIsing,BS-2016-HighTemperatureOrthogonalSpinGlass,MFCKMZ-2019-PlefkaExpansionOrthogonalIsing,FW-2021-ReplicaSymmetricInvariantCouplings,FLS-2022-TAPEquationsOrthogonalIsing}.
The specific $\bJ$ we consider is essentially equivalent to the ``random orthogonal model (ROM)'' of \cite{MPR-1994-NonRandomSpinGlassROM}, which considers $\bJ$ proportional to $2\bP - \bm I$ for $\bP$ an orthogonal projection matrix.

If we consider our model with $\rank(\bP) = \delta n$, then for each $\delta$ there is a threshold $\eta_{\,\mathsf{RSB}}(\delta)$ such that for $\eta > \eta_{\,\mathsf{RSB}}(\delta)$ the above model is conjectured to exhibit RSB \cite{SZ-1981-DynamicTheorySpinGlass,MPV-1987-SpinGlassTheoryBeyond}, and we have $\eta_{\,\mathsf{RSB}}(\delta) \to 1$ as $\delta \to 0$.\footnote{The former claim is discussed in the above references, and the latter may be obtained by following the analysis of the de~Almeida--Thouless~(AT) condition for stability of the replica-symmetric solution for these models from \cite{OW-2001-TAPEquationsOrthogonalIsing}.}
This gives some explanation of the ``magic constant'' 1 appearing in our results: the hardness of general-purpose sampling is witnessed by RSB in a particular distribution of coupling matrices, the orthogonally invariant low-rank projections.

However, the mathematically rigorous results among those above only concern high temperature, i.e., $\bJ = \eta \bP$ for $\eta$ sufficiently small, and in particular only aim to describe a portion of the regime $\eta < \eta_{\mathsf{RSB}}$, while we will be interested in taking $\eta > 1$ (by a small margin), so that the model is just \emph{over} the threshold of RSB.
Analysis in this regime seems quite challenging, and in any case the relationship between RSB and hardness of sampling is only conjectural (though established in some special cases, e.g., by \cite{EAMS-2022-SamplingSKStochasticLocalization} for stable algorithms for the SK model), so we instead proceed by reducing to a hypothesis testing problem for which it is easier to find evidence of hardness.
For this, we will only need one-sided bounds on the partition function or free energy, for which we apply a result of \cite{Comets-1996-SKSphericalBound} on comparing Ising models to \emph{spherical} models with the same coupling matrix, which are much simpler to analyze.
At high temperature, these models have the same limiting free energy (as shown formally by  \cite{BS-2016-HighTemperatureOrthogonalSpinGlass,FW-2021-ReplicaSymmetricInvariantCouplings}), but at low temperature the spherical model's free energy still bounds the Ising model's.
In our setting, these facts are identical to the equivalence of ``quenched'' and ``annealed'' calculations at high temperature, and annealed calculations still bounding quenched ones at low temperature.

\paragraph{Low-degree polynomial algorithms}
The analysis of low-degree polynomials as algorithms for hypothesis testing problems was inspired by the line of work leading to lower bounds for the planted clique problem in the sum-of-squares hierarchy; see in particular \cite{HKPRS-2018-PlantedCliqueSOS4,BHKKMP-2019-PlantedClique} for this line of work as well as \cite{HS-2017-BayesianEstimation,HKPRSS-2017-SOSSpectral,Hopkins-2018-Thesis,KWB-2022-LowDegreeNotes} for further exposition and applications.
More recently, this has grown into a convenient, reliable, and broadly applicable technique for proving lower bounds for average-case problems of many kinds \cite{GJW-2020-LowDegreeOptimization,Kunisky-2020-LowDegreeMorris,Wein-2020-LowDegreeIndependentSet,SW-2020-LowDegreeEstimation,Wein-2022-LowDegreeTensorDecomposition,RSWY-2022-CountCommunitiesLowDegree,KVWX-2023-LowDegreeColoringClique,MWZ-2023-DetectionRecoveryPlantedCycles}.

Especially influential on our approach is the technique of \cite{BKW-2019-ConstrainedPCA} (later also used in \cite{BBKMW-2020-SpectralPlantingColoring,BKW-2020-PositivePCA}), who first showed a reduction from the certification of an upper bound\footnote{This means computing a function of $\bW$ that is an upper bound on $\mathsf{M}(\bW)$ for all $\bW$.} on $\mathsf{M}(\bW) \colonequals \frac{1}{n}\max_{\bx \in \{\pm 1\}^n} \bx^{\top}\bW\bx$ for $\bW$ as in the SK model to hypothesis testing, and then analyzed that hypothesis testing problem using low-degree polynomials.
Indeed, our Conjecture~\ref{conj:wishart} is supported by the second half of that analysis, and the optimization problem $\mathsf{M}(\bW)$ is the ``zero temperature'' version of the free energy problem we study, or the problem of computing the ``ground state energy.''
Softening that argument to apply to the free energy at positive temperature paves the way to further reducing to sampling.

Finally, let us remark that we see our results as being double-edged: on the one hand, to the extent that we believe the general ``low-degree conjecture'' that low-degree polynomial algorithms are optimal in some settings (see \cite{Hopkins-2018-Thesis, KWB-2022-LowDegreeNotes} for various attempts at precise statements), our results indeed give evidence for hardness of sampling and free energy approximation, as we have presented them above.
On the other hand, these reductions describe new algorithms, ones very different from low-degree polynomials, for attempting to solve problems like hypothesis testing in the Wishart spiked matrix model, and therefore perhaps make it more plausible that such algorithms may eventually refute the low-degree conjecture (i.e., show that another type of algorithm can perform substantially better at a natural hypothesis testing task than low-degree polynomials).

\paragraph{Certification, approximation, and sampling} We outline an argument that inspired this work, related to the above case of the SK model and due to Andrea Montanari (reproduced here from private communications with his generous permission).
The specific result of \cite{BKW-2019-ConstrainedPCA} is that it is hard for a certification algorithm to improve on the \emph{spectral bound} $\mathsf{M}(\bW) \leq \|\bW\| \approx 2$ for $\bW$ drawn from the GOE.
On the other hand, we have the pointwise bound, for any $\eta > 0$ and $\bW \in \RR^{n \times n}_{\sym}$, that
\begin{equation}
    \mathsf{M}(\bW) \colonequals \frac{1}{n}\max_{\bx \in \{\pm 1\}^n} \bx^{\top}\bW\bx \leq \frac{1}{\eta \cdot n}\log Z(\eta \bW).
\end{equation}
As mentioned above, for $\bW$ drawn from the GOE as in the SK model, the right-hand side may be efficiently approximated with high probability for all $\eta < 1$ (by the same reduction from integration to sampling invoked earlier).
In fact, when $\eta < 1$, then the expectation of the right-hand side (around which it concentrates) has a simple asymptotic formula, unlike the much more complicated case of $\eta > 1$ (see, e.g., Proposition~2.1 of \cite{ALR-1987-SK}):
\begin{equation}
    \lim_{n \to \infty}\EE \frac{1}{\eta \cdot n}\log Z(\eta \bW) = \frac{\eta}{2} + \frac{2\log(2)}{\eta}.
\end{equation}
For $\eta$ close to but smaller than 1, this is strictly smaller than 2.
If somehow it were possible to convert the sampling-based \emph{approximation} of $\log Z(\eta \bW)$ into a \emph{certifiable bound} for such $\eta$, that would yield an efficient certifiable bound improving on the spectral one, which by the argument of \cite{BKW-2019-ConstrainedPCA} would refute Conjecture~\ref{conj:wishart}.

In other words, conditional on Conjecture~\ref{conj:wishart}, the above gives a hardness result for certifying bounds on the SK model free energy (for certain $\eta$).
Such certification tasks have been investigated by a few works \cite{Risteski-2016-PartitionFunctionsConvexProgramming, RL-2016-MaximumEntropyGoemansWilliamson, JKR-2019-MeanFieldConvexHierarchies}, but are difficult to design algorithms for; in particular, they require a tailored approach since, unlike $\mathsf{M}(\bW)$ itself, the free energy is not directly amenable to algebraic convex relaxations of polynomial optimization problems like the sum-of-squares hierarchy.
This work arose from adapting the above ideas to the perhaps more natural task of sampling or approximating the free energy.
But, in exchange for proving a lower bound against approximation rather than certification, we must expand the class of instances we ask our algorithms to work well on from just typical instances of the SK model to more general coupling matrices.

\section{Notation}

The notation $o(1)$ always refers to the limit $n \to \infty$, and the term ``with high probability'' means with probability $1 - o(1)$.
We write $\SS^{n - 1}(R)$ for the sphere of radius $R$ centered at the origin in $\RR^n$.
We write $\sO(n)$ for the group of $n \times n$ orthogonal matrices, and $\Haar(\sO(n))$ for its Haar measure normalized to be a probability measure.
For a symmetric matrix $\bA \in \RR^{n \times n}_{\sym}$, we write $\lambda_{\max}(\bA) =\lambda_1(\bA) \geq \cdots \geq \lambda_n(\bA) = \lambda_{\min}(\bA)$ for its ordered eigenvalues.
For a further $\bB \in \RR^{n \times n}_{\sym}$, we write $\langle \bA, \bB \rangle \colonequals \Tr(\bA\bB)$.

\section{Preliminaries}

We introduce the following additional thermodynamic quantities associated to the coupling matrix $\bJ$.
The letter ``$p$'' denotes quantities called ``pressures'' in statistical physics parlance (which only differ by a rescaling from free energies).
\begin{align}
  \what{Z}(\bJ) &\colonequals \frac{1}{2^n} Z(\bJ) = \Ex_{\bx \sim \Unif(\{\pm 1\}^n)} \exp\left(\frac{1}{2}\bx^{\top} \bJ \bx \right), \\
  p(\bJ) &\colonequals \frac{1}{n} \log \what{Z}(\bJ).
           \intertext{We also define the following analogous quantities over spherical rather than Ising models, as denoted by the ``$\SS$'' superscript:}
  \what{Z}^{\SS}(\bJ) &\colonequals \Ex_{\bx \sim \Unif(\SS^{n - 1}(\sqrt{n}))} \exp\left(\frac{1}{2}\bx^{\top} \bJ \bx \right), \\
  p^{\SS}(\bJ) &\colonequals \frac{1}{n} \log \what{Z}^{\SS}(\bJ).
\end{align}

We will use two results that we learned of from the work of Comets \cite{Comets-1996-SKSphericalBound}.
We repeat both proofs below for the sake of giving a self-contained exposition.

We call a random matrix $\bJ$ \emph{orthogonally invariant} if $\bJ$ has the same law as $\bQ\bJ\bQ^{\top}$ for any orthogonal matrix $\bQ$.
The first result shows that, for orthogonally invariant coupling matrices, the pressure of the Ising model is dominated in expectation by that of the corresponding spherical model.
As mentioned earlier, this is equivalent to the annealed computation of the free energy bounding the quenched one in orthogonally invariant models.
\begin{lemma}[Equation (2.2) of \cite{Comets-1996-SKSphericalBound}]
    \label{lem:spherical-dom}
    Suppose that $\bJ$ is a random matrix whose law is orthogonally invariant.
    Then,
    \begin{equation}
        \Ex_{\bJ} p(\bJ) \leq \Ex_{\bJ} p^{\SS}(\bJ).
    \end{equation}
\end{lemma}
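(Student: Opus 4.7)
The plan is to exploit the orthogonal invariance of $\bJ$ to ``symmetrize'' the hypercube average $\Unif(\{\pm 1\}^n)$ into the sphere average $\Unif(\SS^{n-1}(\sqrt{n}))$, paying only a one-sided cost via Jensen's inequality. The central observation is that every $\bx \in \{\pm 1\}^n$ lies on $\SS^{n-1}(\sqrt{n})$, and if $\bQ \sim \Haar(\sO(n))$ then $\bQ^{\top}\bx$ is uniformly distributed on that sphere.

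The execution would go as follows. First, I would introduce an independent $\bQ \sim \Haar(\sO(n))$ and use orthogonal invariance to write
\begin{equation*}
    \Ex_{\bJ} p(\bJ) = \Ex_{\bJ} \Ex_{\bQ} p(\bQ \bJ \bQ^{\top}) = \Ex_{\bJ} \Ex_{\bQ} \frac{1}{n} \log \Ex_{\bx \sim \Unif(\{\pm 1\}^n)} \exp\!\Big( \tfrac{1}{2} (\bQ^{\top}\bx)^{\top} \bJ (\bQ^{\top}\bx) \Big),
\end{equation*}
where I rewrote $\bx^{\top} \bQ \bJ \bQ^{\top} \bx = (\bQ^{\top}\bx)^{\top} \bJ (\bQ^{\top}\bx)$. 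Next, holding $\bJ$ fixed, I apply Jensen's inequality (concavity of $\log$) to pull $\Ex_{\bQ}$ inside the logarithm, then Fubini to swap $\Ex_{\bQ}$ with $\Ex_{\bx}$:
\begin{equation*}
    \Ex_{\bQ} \log \Ex_{\bx} \exp\!\Big( \tfrac{1}{2}(\bQ^{\top}\bx)^{\top} \bJ (\bQ^{\top}\bx) \Big) \leq \log \Ex_{\bx} \Ex_{\bQ} \exp\!\Big( \tfrac{1}{2}(\bQ^{\top}\bx)^{\top} \bJ (\bQ^{\top}\bx) \Big).
\end{equation*}

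For each fixed $\bx \in \{\pm 1\}^n$, the distribution of $\bQ^{\top}\bx$ under Haar $\bQ$ is exactly $\Unif(\SS^{n-1}(\sqrt{n}))$, so $\Ex_{\bQ} \exp(\frac{1}{2}(\bQ^{\top}\bx)^{\top} \bJ (\bQ^{\top}\bx)) = \widehat{Z}^{\SS}(\bJ)$, independent of $\bx$. The outer $\Ex_{\bx}$ then collapses trivially and I obtain $\Ex_{\bQ} p(\bQ \bJ \bQ^{\top}) \leq p^{\SS}(\bJ)$, and taking $\Ex_{\bJ}$ yields the claim.

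There is no real obstacle: all the work is in choosing the correct order of integration. The only thing to be a little careful about is that Jensen's inequality is applied on the $\bQ$-average (not the $\bJ$-average) and goes in the favorable direction because $\log$ is concave; had one tried instead to apply Jensen in $\bJ$ directly, one would get an upper bound on $\Ex_{\bJ} p(\bJ)$ by the annealed quantity $\frac{1}{n}\log \Ex_{\bx} \Ex_{\bJ} \exp(\tfrac{1}{2}\bx^{\top}\bJ\bx)$, which is weaker than the spherical-pressure bound and would not suffice for the downstream use. The orthogonal symmetrization is precisely what converts this annealed computation into one in which the expectation over $\bJ$ lives outside the logarithm.
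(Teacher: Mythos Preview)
Your proposal is correct and matches the paper's proof essentially step for step: introduce an independent Haar-distributed $\bQ$, use orthogonal invariance to insert $\Ex_{\bQ}$, apply Jensen to move $\Ex_{\bQ}$ inside the logarithm, and recognize that $\bQ^{\top}\bx$ (equivalently $\bQ\bx$) is uniform on $\SS^{n-1}(\sqrt{n})$ for each hypercube vector $\bx$. The only cosmetic difference is that the paper conjugates as $\bQ^{\top}\bJ\bQ$ rather than $\bQ\bJ\bQ^{\top}$, which is immaterial under Haar measure.
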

\begin{proof}
    We apply orthogonal invariance and use Jensen's inequality on an expectation over a Haar-distributed orthogonal matrix:
    \begin{align*}
      \Ex_{\bJ} p(\bJ)
      &= \Ex_{\bJ} \Ex_{\bQ \sim \mathsf{Haar}(\sO(n))} p(\bQ^{\top}\bJ \bQ) \tag{orthogonal invariance} \\
      &= \frac{1}{n} \Ex_{\bJ} \Ex_{\bQ \sim \mathsf{Haar}(\sO(n))} \log \Ex_{\bx \sim \Unif(\{\pm 1\}^n)} \exp\left(\frac{1}{2}\bx^{\top} \bQ^{\top}\bJ\bQ \bx\right) \\
      &\leq \frac{1}{n} \Ex_{\bJ} \log \Ex_{\bQ \sim \mathsf{Haar}(\sO(n))} \Ex_{\bx \sim \Unif(\{\pm 1\}^n)} \exp\left(\frac{1}{2}\bx^{\top} \bQ^{\top}\bJ\bQ \bx\right) \tag{Jensen's inequality} \\
      &= \frac{1}{n} \Ex_{\bJ} \log \Ex_{\bx \sim \Unif(\SS^{n - 1}(\sqrt{n}))} \exp\left(\frac{1}{2}\bx^{\top} \bJ \bx\right) \\
      &= \Ex_{\bJ} p^{\SS}(\bJ), \numberthis
    \end{align*}
    completing the proof.
\end{proof}

The use of this observation is that, for any fixed $\bJ$, $\what{Z}^{\SS}(\bJ)$ (and so also $p^{\SS}(\bJ)$) is a function only of the eigenvalues of $\bJ$: thanks to the rotational invariance of the uniform measure on the sphere,
\begin{equation}
    \what{Z}^{\SS}(\bJ) = \Ex_{\bx \sim \Unif(\SS^{n - 1}(\sqrt{n}))} \exp\left(\frac{1}{2}\bx^{\top} \bJ \bx \right) = \Ex_{\bx \sim \Unif(\SS^{n - 1}(\sqrt{n}))} \exp\left(\frac{1}{2}\sum_{i = 1}^n \lambda_i(\bJ) x_i^2 \right).
\end{equation}

Such quantities, a special case of \emph{spherical} or \emph{Harish-Chandra-Itzykson-Zuber integrals}, have been studied at length asymptotically but have no known closed form in general \cite{GM-2005-RTransformSphericalIntegrals}.
However, we have the following useful and simple bound, which is the second result of Comets.
The right-hand side below is the same as the formula for the limiting pressure or free energy of a spherical model with invariant couplings in the replica-symmetric regime; see, e.g., Appendix~D of \cite{FW-2021-ReplicaSymmetricInvariantCouplings}.
The below version is convenient since it gives a non-asymptotic statement and clarifies that the error term does not depend on the model itself at all.
\begin{lemma}[Equation (2.4) of \cite{Comets-1996-SKSphericalBound}]
    \label{lem:spherical-bound}
    For any $\bJ \in \RR^{n \times n}_{\sym}$,
    \begin{equation}
        p^{\SS}(\bJ) \leq \inf_{s > \lambda_{\max}(\bJ) / 2}\left\{ s - \frac{1 + \log(2)}{2} - \frac{1}{2n} \sum_{i = 1}^n \log\left(s - \frac{\lambda_i(\bJ)}{2}\right) \right\} + o(1),
    \end{equation}
    where we emphasize that $o(1)$ hides an absolute sequence going to zero with $n$, not depending on $\bJ$.
\end{lemma}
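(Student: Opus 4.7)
The plan is to upper-bound $\hat Z^{\SS}(\bJ) = A_{\sqrt{n}}$, writing $A_r \colonequals \Ex_{\bx \sim \Unif(\SS^{n-1}(r))}\exp(\frac{1}{2}\bx^{\top}\bJ\bx)$, by combining a Laplace-transform identity for the sphere integral with a Markov-type inequality that exploits the monotonicity of $r \mapsto A_r$. First I would reduce to the case $\bJ \succeq \zero$: on $\SS^{n-1}(\sqrt{n})$ we have $\bx^{\top}(\bJ + c\bI)\bx = \bx^{\top}\bJ\bx + cn$, so $p^{\SS}(\bJ + c\bI) = p^{\SS}(\bJ) + c/2$, and the substitution $s \mapsto s + c/2$ shifts the right-hand side of the lemma by the same constant; so I may take $c = -\lambda_{\min}(\bJ)$ and assume $\bJ \succeq \zero$. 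The payoff is that $A_r = \Ex_{\bu \sim \Unif(\SS^{n-1}(1))}\exp(\frac{r^2}{2}\bu^{\top}\bJ\bu)$ is then non-decreasing in $r$, as the moment generating function at $r^2$ of the non-negative random variable $\frac{1}{2}\bu^{\top}\bJ\bu$.

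Next I would establish the Laplace-transform identity by computing the Gaussian-weighted Lebesgue integral $\int_{\RR^n}\exp(\frac{1}{2}\bx^{\top}\bJ\bx - s\|\bx\|^2)\,d\bx$ in two ways for $s > \lambda_{\max}(\bJ)/2$: diagonalizing $\bJ$ gives $\pi^{n/2}\prod_i(s - \lambda_i/2)^{-1/2}$, while polar coordinates with $t = r^2$ and $|\SS^{n-1}(1)| = 2\pi^{n/2}/\Gamma(n/2)$ give $\frac{\pi^{n/2}}{\Gamma(n/2)}\int_0^\infty t^{n/2-1}e^{-st}A_{\sqrt{t}}\,dt$. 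Equating these and using $A_{\sqrt{t}} \geq A_{\sqrt{n}}$ for $t \geq n$ produces the Markov-type bound
\[
    A_{\sqrt{n}}\int_n^\infty t^{n/2-1}e^{-st}\,dt \;\leq\; \Gamma(n/2)\prod_{i=1}^n(s - \lambda_i(\bJ)/2)^{-1/2}.
\]

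I would then lower-bound the incomplete gamma factor by the elementary inequality $\Gamma(a, T) \geq T^{a-1}e^{-T}$ (valid for $a \geq 1$, $T > 0$, since $w^{a-1} \geq T^{a-1}$ on $[T, \infty)$): after the substitution $w = st$ and with $a = n/2$, $T = sn$, this yields $\int_n^\infty t^{n/2-1}e^{-st}\,dt \geq s^{-1}n^{n/2-1}e^{-sn}$. Dividing and taking $\frac{1}{n}\log$, then applying Stirling's formula $\frac{1}{n}\log\Gamma(n/2) = \frac{\log n}{2} - \frac{1+\log 2}{2} + O(\frac{\log n}{n})$, causes the $\pm\frac{\log n}{2}$ terms to cancel precisely (this being the algebraic source of the constant $\frac{1+\log 2}{2}$), leaving
\[
    p^{\SS}(\bJ) \;\leq\; s - \frac{1 + \log 2}{2} - \frac{1}{2n}\sum_{i=1}^n\log(s - \lambda_i(\bJ)/2) + O\!\left(\frac{\log(sn)}{n}\right)
\]
for every admissible $s$. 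Taking the infimum over $s > \lambda_{\max}(\bJ)/2$ then gives the lemma.

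The main obstacle — albeit a modest one — is verifying that the error can be made $\bJ$-independent as the statement requires. The Stirling correction manifestly is, and the residue $\frac{\log s}{n}$ is harmless provided the optimal $s$ stays bounded; this is automatic once $\lambda_{\max}(\bJ)$ is bounded, since the stationarity condition $\frac{1}{n}\sum_i 1/(s - \lambda_i/2) = 2$ combined with the pointwise estimate $s - \lambda_{\max}(\bJ)/2 \geq \frac{1}{2}$ controls $s$ in terms of the spectrum. If tighter uniformity were needed for some unbounded family of $\bJ$, one could replace the crude $\Gamma(a, T) \geq T^{a-1}e^{-T}$ with the sharp Laplace asymptotic $\Gamma(n/2, sn) \asymp (sn)^{n/2-1}e^{-sn}/(s - \frac{1}{2})$ valid for $s > \frac{1}{2}$, which loses only a polynomial-in-$n$ factor and so contributes only $O(\frac{\log n}{n})$ after the $\frac{1}{n}\log$.
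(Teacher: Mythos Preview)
Your argument is correct in its core and uses the same Gaussian-integral identity as the paper, but the way you extract the spherical average differs. The paper does not shift $\bJ$ to be positive semidefinite; instead it restricts the Gaussian integral to the annulus $A=\{(1-\epsilon)n\le\|\bx\|^2\le n\}$, writes the integrand as $\exp\big(\tfrac{\|\bx\|^2}{n}[\tfrac12\what\bx^\top\bJ\what\bx - sn]\big)$, applies Jensen in the radial variable, and then uses the sign condition $\tfrac12\what\bx^\top\bJ\what\bx - sn<0$ (a direct consequence of $s>\lambda_{\max}(\bJ)/2$) to replace the radial average by $1$. The only error term then comes from Stirling in the estimate of $\nu(A)$, and is manifestly independent of both $\bJ$ and $s$.

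Your monotonicity device forces you to restrict to $t\ge n$, which produces the residual $\frac{\log s}{n}$ and is the source of the uniformity worry you flag. Your first proposed fix does not quite work: the stationarity condition gives $s^\star-\lambda_{\max}(\bJ)/2\le\frac12$, not $\ge\frac12$, so it bounds $s^\star$ above only by $\lambda_{\max}(\bJ)/2+\frac12$, which is not uniform in $\bJ$. The Laplace-asymptotic fix could be made to work but needs care near $s=\tfrac12$. A cleaner repair within your own framework is to drop the PSD shift and instead observe that $t\mapsto e^{-st}A_{\sqrt t}=\Ex_{\bu}\exp\big(t[\tfrac12\bu^\top\bJ\bu - s]\big)$ is \emph{decreasing} whenever $s>\lambda_{\max}(\bJ)/2$; restricting to $t\in[0,n]$ then gives
\[
e^{-sn}A_{\sqrt n}\int_0^n t^{n/2-1}\,dt \;\le\; \Gamma(n/2)\prod_i(s-\lambda_i/2)^{-1/2},
\]
and since $\int_0^n t^{n/2-1}\,dt=\tfrac{2}{n}n^{n/2}$ the $s$-dependent error disappears. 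This is essentially the paper's Jensen step recast in your Laplace-transform language.
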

\begin{proof}
    Fix $\epsilon \in (0, 1)$ and $s > \lambda_{\max}(\bJ) / 2$.
    By a standard evaluation of the Gaussian integral, we have, writing $\nu$ for Lebesgue measure and $A \colonequals \{\bx: (1 - \epsilon)n \leq \|\bx\|^2 \leq n\}$,
    \begin{align*}
      &\hspace{-1cm} \exp\left(\frac{n}{2}\log(2\pi) - \frac{1}{2}\log \det (2s\bm I - \bJ)\right) \\
      &= \int_{\RR^n} \exp\left(\frac{1}{2}\bx^{\top}\bJ \bx - s\|\bx\|^2\right)d\bx \\
      &\geq \int_{A} \exp\left(\frac{1}{2}\bx^{\top}\bJ \bx - s\|\bx\|^2\right)d\bx \\
      &= \nu(A) \Ex_{\bx \sim \Unif(A)} \exp\left(\frac{1}{2}\bx^{\top}\bJ \bx - s\|\bx\|^2\right) \\
      \intertext{and, since when $\bx \sim \Unif(A)$ the norm and direction of $\bx$ are independent, we have, by writing $\what{\bx} \colonequals \sqrt{n} \cdot \bx / \|\bx\|$ and applying Jensen's inequality with respect to $\|\bx\|$,}
      &= \nu(A) \Ex_{\bx \sim \Unif(A)} \exp\left(\frac{\|\bx\|^2}{n}\left[\frac{1}{2}\what{\bx}^{\top}\bJ \what{\bx} - s n \right]\right) \\
      &\geq \nu(A) \Ex_{\what{\bx} \sim \Unif(\SS^{n - 1}(\sqrt{n}))} \exp\left(\left[\Ex_{\bx \sim \Unif(A)}\frac{\|\bx\|^2}{n}\right] \cdot \left[\frac{1}{2}\what{\bx}^{\top}\bJ \what{\bx} - s n \right]\right)
        \intertext{where, since by our assumption on $s$ we have $\frac{1}{2}\what{\bx}^{\top}\bJ \what{\bx} - s n < 0$,}
      &\geq \nu(A) \Ex_{\what{\bx} \sim \Unif(\SS^{n - 1}(\sqrt{n}))} \exp\left(\frac{1}{2}\what{\bx}^{\top}\bJ \what{\bx} - s n \right) \\
      &= \nu(A) \exp(-sn) \what{Z}^{\SS}(\bJ). \numberthis
    \end{align*}
    The result then follows from evaluating the integral of $\nu(A)$ in spherical coordinates and estimating to obtain
    \begin{align*}
      \nu(A)
      &= \pi^{n/2} \frac{n^{n/2}}{\Gamma(1 + n/2)}\left(1 - (1 - \epsilon)^{n/2}\right) \\
      &\geq \exp\left(n\left[\frac{\log(2\pi)}{2} + \frac{1}{2} - o(1)\right]\right), \numberthis
    \end{align*}
    and rearranging the above.
\end{proof}
Similar ideas relating a normalized partition function over the hypercube to a Gaussian integral that may be computed exactly were also used in the recent work of \cite{KZ-2023-OnlineMatrixDiscrepancy} for a different application in discrepancy theory.

\section{Proof of Theorem~\ref{thm:free-energy}}
\label{sec:proof}

We first construct from draws of $\PP$ and $\QQ$ coupling matrices $\bJ$ that we may substitute into our algorithms to distinguish $\PP$ and $\QQ$.
Let us outline the main idea.
As in the arguments of \cite{BKW-2019-ConstrainedPCA}, we will build \emph{projection} matrices from draws of vectors from $\PP$ and $\QQ$; when built from a draw of $\PP$, the projection matrix will have a hypercube vector lying near its row space, while when built from a draw of $\QQ$, its row space will be uniformly random (of a specified dimension).
Therefore, the Ising free energy for the former $\bJ$ will be larger than that for the latter $\bJ$, and approximating the free energy will let us hypothesis test.

Let $(\PP, \QQ) = \mathsf{Wishart}(\beta, \gamma)$ for $\beta \in (-1, 0), \gamma \in (1, \infty)$ to be chosen later.
We write $N = N(n) \colonequals \lceil n / \gamma \rceil \leq n$ as in the definition of the spiked Wishart model, and we also write $N^{\prime} = N^{\prime}(n) \colonequals n - N(n) = n - \lceil n / \gamma \rceil = \lfloor (1 - \gamma^{-1})n \rfloor$.
For any $1 \leq N \leq n$ and any $\by_1, \dots, \by_N \in \RR^n$, let us write $\proj(\by_1, \dots, \by_N) \in \RR^{n \times n}_{\sym}$ for the orthogonal projection matrix to the orthogonal complement of the span of the $\by_1, \dots, \by_N$.
Let us also write $\proj(\PP)$ and $\proj(\QQ)$ for the law of the matrix given by $\proj$ applied to a draw from $\PP$ and of $\QQ$, respectively.
Note that both of these laws are supported on projection matrices of rank $N^{\prime}$.

We now analyze the pressure of Ising models whose coupling matrices are rescalings of projection matrices drawn from each of these laws.
We will use the binary entropy function,
\begin{equation}
    H(x) \colonequals -x\log(x) - (1 - x) \log(1 - x) \text{ for } x \in [0, 1].
\end{equation}

\begin{lemma}
    \label{lem:PP}
    Let $\bP \sim \proj(\PP)$ and $\eta > 0$.
    Define
    \begin{equation}
        \epsilon = \epsilon(\beta, \gamma) \colonequals \frac{\sqrt{2(1 + \beta)}}{\sqrt{\gamma} - 1}.
    \end{equation}
    Then, with high probability,
    \begin{equation}
        p(\eta \bP) \geq \frac{\eta}{2} - \log(2) + \sup_{x \in [0, 1]} \left\{ H(x) + 2\eta x^2 - 2\eta x\right\} - \epsilon \eta - o(1).
    \end{equation}
\end{lemma}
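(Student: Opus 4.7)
The plan is to lower bound $p(\eta \bP) = \frac{1}{n}\log \Ex_{\bx \sim \Unif(\{\pm 1\}^n)} \exp(\frac{\eta}{2}\bx^{\top}\bP\bx)$ by restricting the expectation to a single Hamming-overlap class with respect to the planted vector $\bx^*$ and exploiting the fact that, under $\PP$, the planted $\bx^*$ has only a small projection into $V \colonequals \mathrm{span}(\by_1, \dots, \by_N)$. Writing $\bQ \colonequals \bI - \bP$ and using $\bx^{\top}\bP\bx = n - \bx^{\top}\bQ\bx$,
\begin{equation*}
p(\eta \bP) = \tfrac{\eta}{2} - \log 2 + \tfrac{1}{n}\log \sum_{\bx \in \{\pm 1\}^n} \exp(-\tfrac{\eta}{2}\bx^{\top}\bQ\bx).
\end{equation*}
For $x \in [0, 1]$ and $k = \lfloor xn \rfloor$, let $\sC_k$ be the set of $\bx$ with $|\{i : x_i = x_i^*\}| = k$; keeping only such $\bx$ in the sum and applying Jensen's inequality to the convex function $e^{-\eta t/2}$ gives $\sum_{\bx \in \sC_k} e^{-\eta \bx^{\top}\bQ\bx/2} \geq |\sC_k|\, e^{-\eta A_k/2}$ where $A_k \colonequals \Ex_{\bx \sim \Unif(\sC_k)} \bx^{\top}\bQ\bx$, and Stirling gives $\tfrac{1}{n}\log|\sC_k| \geq H(x) - o(1)$.

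A direct second-moment calculation over $\bx \sim \Unif(\sC_k)$ shows $\Ex \bx\bx^{\top} = (1 - \rho)\bI + \rho\, \bx^*(\bx^*)^{\top}$ with $\rho = (2x - 1)^2 + O(1/n)$, so that, writing $\kappa \colonequals \|\proj_V \bx^*\|^2/n$,
\begin{equation*}
\tfrac{A_k}{n} = (1 - \rho)\tfrac{N}{n} + \rho\, \kappa = \tfrac{4x(1 - x)}{\gamma} + (2x - 1)^2 \kappa + o(1).
\end{equation*}
Combining the above,
\begin{equation*}
p(\eta \bP) \;\geq\; \tfrac{\eta}{2} - \log 2 + H(x) - \tfrac{\eta}{2}\bigl[\tfrac{4x(1-x)}{\gamma} + (2x-1)^2 \kappa\bigr] - o(1).
\end{equation*}
Since $\gamma \geq 1$ and $(2x - 1)^2 \leq 1$, the bracket is at most $4x(1 - x) + \kappa$, so once I show $\kappa \leq 2\epsilon$ with high probability, taking $\sup_x$ and rewriting $-2\eta x(1-x) = 2\eta x^2 - 2\eta x$ yields the lemma.

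For the crucial bound on $\kappa$, decompose $\by_i = \sqrt{1+\beta}\, g_i\, \bu + \by_i^{\perp}$ with $\bu \colonequals \bx^*/\sqrt{n}$, $g_i \sim \sN(0, 1)$, and $\by_i^{\perp} \sim \sN(\bm 0, \bI - \bu\bu^{\top})$ all mutually independent (matching the law under $\PP$). Setting $\ba \colonequals \sqrt{1+\beta}\,\bg$ and collecting the $\by_i^{\perp}$ into $\bY^{\perp}$, the identity $(\bY^{\perp})^{\top}\bu = \bm 0$ yields $\bY^{\top}\bY = \ba\ba^{\top} + \bS$ with $\bS \colonequals (\bY^{\perp})^{\top}\bY^{\perp}$ a Wishart $W_N(n - 1, \bI_N)$ matrix independent of $\ba$, and $\bY^{\top}\bx^* = \sqrt n\, \ba$. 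Sherman--Morrison then gives
\begin{equation*}
\kappa = \ba^{\top}(\ba\ba^{\top} + \bS)^{-1}\ba = \tfrac{\ba^{\top}\bS^{-1}\ba}{1 + \ba^{\top}\bS^{-1}\ba}.
\end{equation*}
Using $\Ex \bS^{-1} = \tfrac{1}{n - N - 2}\bI_N$ and concentration of a Gaussian quadratic form against an independent Wishart inverse, together with $\|\ba\|^2 = (1+\beta)\|\bg\|^2 = (1+\beta)N(1 + o(1))$, I obtain $\ba^{\top}\bS^{-1}\ba \to \tfrac{1+\beta}{\gamma - 1}$ in probability, and hence $\kappa \to \tfrac{1+\beta}{\gamma + \beta}$. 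The inequality $\tfrac{1+\beta}{\gamma+\beta} \leq 2\epsilon$ is then elementary: squaring, it reduces to $\sqrt{1+\beta}(\sqrt\gamma - 1) \leq 2\sqrt 2\,(\gamma + \beta)$, which follows from $\sqrt{1+\beta} \leq 1$ and $\gamma + \beta \geq \gamma - 1 = (\sqrt\gamma - 1)(\sqrt\gamma + 1) \geq 2(\sqrt\gamma - 1)$.

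The main obstacle is obtaining the sharp asymptotic $\kappa = \tfrac{1+\beta}{\gamma+\beta} + o(1)$: the cruder Marchenko--Pastur bound $\kappa \leq \tfrac{\|\bY^{\top}\bx^*\|^2}{n\,\lambda_{\min}(\bY^{\top}\bY)} \approx \tfrac{1+\beta}{(\sqrt\gamma - 1)^2}$ is not small enough to give $\kappa \leq 2\epsilon$ across the full range $\beta \in (-1, 0), \gamma > 1$, so it is essential to exploit the exact rank-one shift $\bY^{\top}\bY = \bS + \ba\ba^{\top}$ via Sherman--Morrison. Establishing the requisite concentration of $\ba^{\top}\bS^{-1}\ba$ around its mean, while standard, is the bulk of the technical work once this algebraic setup is in place.
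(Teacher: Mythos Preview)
Your approach is correct and takes a genuinely different route from the paper. Where the paper lower-bounds $\|\bP\by\|^2$ \emph{pointwise} over each overlap class---writing $\bz \colonequals \bP\bx^* / \|\bP\bx^*\|$ and using $\|\bP\by\|^2 \geq \langle \by, \bz\rangle^2 \geq \frac{1}{n}\langle\bx^*,\by\rangle^2 - 2\sqrt{n}\,\|(\bI - \bP)\bx^*\|$ via Cauchy--Schwarz---you instead apply Jensen to reduce to the average $A_k$ and compute it exactly from the second moment $\Ex_{\sC_k}[\bx\bx^\top] = (1-\rho)\bI + \rho\,\bx^*(\bx^*)^\top$. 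Both land on the same final bound; your version is arguably more systematic and keeps the explicit $\gamma$-dependence visible longer before you discard it via $1/\gamma \leq 1$.

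Your stated ``main obstacle'' is illusory, however. You need only $\kappa \leq 2\epsilon$, and the crude bound you dismiss already gives $\kappa \lesssim \tfrac{1+\beta}{(\sqrt\gamma-1)^2} = \epsilon^2/2 \leq 2\epsilon$ whenever $\epsilon \leq 4$, while for $\epsilon \geq 1/2$ the trivial $\kappa \leq 1 \leq 2\epsilon$ suffices; together these cover the full range $\beta \in (-1,0)$, $\gamma > 1$. This is exactly what the paper does: it simply quotes prior work (Bandeira--Kunisky--Wein) for $\|(\bI-\bP)\bx^*\|^2 \leq \epsilon^2 n$, i.e.\ $\kappa \leq \epsilon^2$, and moves on. Your Sherman--Morrison computation of the sharp limit $\kappa \to \tfrac{1+\beta}{\gamma+\beta}$ is correct and pleasant, but unnecessary for this lemma.
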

\begin{proof}
    When $(\by_1, \dots, \by_N) \sim \PP$ and $\bP$ is the projection matrix to the orthogonal complement of their span, if $\bx \in \{\pm 1\}^n$ is the ``planted'' vector chosen in the process of sampling from $\PP$, then by calculations in the proof of \cite[Theorem 3.8]{BKW-2019-ConstrainedPCA} we have
    \begin{equation}
        \label{eq:PP-x-fraction}
        \|\bP \bx\|^2 \geq \left(1 - 2\frac{1 + \beta}{(\sqrt{\gamma} - 1)^2}\right) n = (1 - \epsilon^2)n.
    \end{equation}
    with high probability.
    The rest of the proof will be on the event that \eqref{eq:PP-x-fraction} holds.
    On this event, we group the terms in the sum inside of the pressure according to their ``overlap'' with $\bx$, i.e., the number of entries on which they agree with $\bx$:
    \begin{align*}
      p(\eta \bP)
      &= -\log(2) + \frac{1}{n}\log\left(\sum_{\by \in \{\pm 1\}^n} \exp\left(\frac{\eta}{2} \|\bP\by\|^2 \right)\right) \\
      &\geq -\log(2) + \frac{1}{n}\log\left(\sum_{r = 0}^n \binom{n}{r}\exp\left(\frac{\eta}{2}\, \min_{\by: \|\by\| = \sqrt{n}, \langle \bx, \by \rangle = 2r - n}\|\bP\by\|^2 \right)\right) \numberthis \label{eq:p-intermediate-1}
    \end{align*}
    Let $\bz = \bP\bx / \|\bP\bx\|$, a vector lying in the row space of $\bP$.
    Then, for any $\by$ with $\|\by\| = \sqrt{n}$,
    \begin{align*}
      \|\bP\by\|^2
      &\geq \langle \by, \bz \rangle^2 \\
      &= \langle \bP \bx, \by \rangle^2 / \|\bP\bx\|^2 \\
      &\geq \frac{1}{n}\langle \bP \bx, \by \rangle^2 \\
      &= \frac{1}{n}\left(\langle \bx, \by \rangle - \langle (\bm I - \bP)\bx, \by \rangle\right)^2 \\
      &\geq \frac{1}{n}\left(\langle \bx, \by \rangle^2 - 2|\langle \bx, \by \rangle| \cdot |\langle (\bm I - \bP)\bx, \by \rangle|\right)
        \intertext{and by the Cauchy-Schwarz inequality}
      &\geq \frac{1}{n}\left(\langle \bx, \by \rangle^2 - 2n^{3/2} \|(\bm I - \bP)\bx\|\right) \\
      &\geq \frac{1}{n}\langle \bx, \by \rangle^2 - 2\epsilon n, \numberthis
    \end{align*}
    where at the end we use our assumption that \eqref{eq:PP-x-fraction} holds.
    Thus, continuing from above in~\eqref{eq:p-intermediate-1}, we have
    \begin{align*}
      p(\eta \bP)
      &\geq - \log(2) + \frac{1}{n}\log\left(\sum_{r = 0}^n \binom{n}{r}\exp\left(\frac{\eta}{2} \left(\frac{1}{n}(2r - n)^2 - 2\epsilon n\right) \right)\right) \\
      &= -\log(2) - \epsilon \eta + \frac{1}{n}\log\left(\sum_{r = 0}^n \binom{n}{r}\exp\left(\frac{\eta}{2n} (2r - n)^2 \right)\right)
        \intertext{and applying a standard inequality on the binomial coefficient and an approximation argument,}
      &\geq -\log(2) - \epsilon \eta - o(1) + \frac{1}{n}\log\left(\sum_{r = 0}^n \exp\left(n\left[ H\left(\frac{r}{n}\right) + \frac{\eta}{2}\left(2\frac{r}{n} - 1\right)^2\right]\right)\right) \\
      &\geq - \log(2) - \epsilon \eta - o(1) + \sup_{x \in [0, 1]} \left\{ H(x) + \frac{\eta}{2}(2x - 1)^2\right\} \\
      &= \frac{\eta}{2} - \log(2) + \sup_{x \in [0, 1]} \left\{ H(x) + 2\eta x^2 - 2\eta x\right\} -\epsilon \eta - o(1), \numberthis
    \end{align*}
    completing the proof.
\end{proof}
\noindent
We note that the expression for $\epsilon$ explains why we must work in the regime $\beta \in (-1, 0)$: we must have at the very least $\epsilon < 1$ in order for the argument above not to be vacuous, and when $\gamma$ is close to 1 then we must have $\beta < 0$ in order to have $\epsilon < 1$.

\begin{lemma}
    \label{lem:QQ}
    Let $\eta, \delta > 0$.
    There exists $\gamma = \gamma(\eta, \delta) > 1$ such that, when $\bP \sim \proj(\QQ)$, then with high probability
    \begin{equation}
        p(\eta \bP) \leq \frac{\eta}{2} - \frac{1}{2}\log(\eta) - \frac{1}{2} + \delta.
    \end{equation}
\end{lemma}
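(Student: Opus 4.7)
My plan is to prove Lemma~\ref{lem:QQ} in three steps: a spherical reduction via Lemma~\ref{lem:spherical-dom}, explicit optimization of the resulting bound from Lemma~\ref{lem:spherical-bound}, and an upgrade from expectation to high probability via Haar concentration.

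\textbf{Step 1: Spherical reduction.} Under $\QQ$, the vectors $\by_i \sim \sN(\zero, \bI_n)$ are almost surely linearly independent, so $\bP \sim \proj(\QQ)$ is a Haar-uniform rank-$N'$ orthogonal projection, where $N' = n - \lceil n/\gamma\rceil$. In particular, its law is orthogonally invariant, and Lemma~\ref{lem:spherical-dom} yields $\Ex_{\bP} p(\eta\bP) \leq \Ex_{\bP} p^{\SS}(\eta\bP)$. Since $p^{\SS}$ depends only on the spectrum of its argument, and the spectrum of $\eta\bP$ is the deterministic multiset consisting of $N'$ copies of $\eta$ and $N - n + N'$... that is, $N$ copies of $0$, the right-hand side equals the deterministic quantity $p^{\SS}(\eta\bP_*)$ for any fixed rank-$N'$ projection $\bP_*$.

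\textbf{Step 2: Optimize the spherical bound.} Let $\alpha \colonequals N'/n$, which converges to $1 - 1/\gamma$ as $n \to \infty$. Lemma~\ref{lem:spherical-bound} gives
\[
p^{\SS}(\eta\bP_*) \leq \inf_{s > \eta/2}\!\left\{s - \tfrac{1+\log 2}{2} - \tfrac{\alpha}{2}\log\!\big(s - \tfrac{\eta}{2}\big) - \tfrac{1-\alpha}{2}\log s\right\} + o(1).
\]
I analyze this in the regime $\alpha \to 0^+$, i.e., $\gamma \downarrow 1$. Substituting $s = \eta/2 + c$ and formally setting $\alpha = 0$, the bracketed expression reduces to $\eta/2 + c - \tfrac{1+\log 2}{2} - \tfrac{1}{2}\log(\eta/2 + c)$, whose infimum over $c > 0$ equals $0$ (attained at $c = (1-\eta)/2$) when $\eta \leq 1$, and equals $\eta/2 - \tfrac{1}{2}\log\eta - \tfrac{1}{2}$ (approached as $c \searrow 0$) when $\eta > 1$. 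Since $\eta/2 - \tfrac{1}{2}\log\eta - \tfrac{1}{2} \geq 0$ for all $\eta > 0$, in either case the $\alpha = 0$ infimum is at most the target value. Then, by picking $c > 0$ small depending on $\eta, \delta$, and subsequently $\alpha > 0$ small enough that $\alpha\log(1/c)$ is negligible, the perturbation from the $\alpha = 0$ calculation can be made at most $\delta/2$, giving $p^{\SS}(\eta\bP_*) \leq \eta/2 - \tfrac{1}{2}\log\eta - \tfrac{1}{2} + \delta/2$ for the appropriate $\gamma = (1-\alpha)^{-1} > 1$.

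\textbf{Step 3: Haar concentration.} Writing $\bP = \bQ \bP_* \bQ^\top$ with $\bQ \sim \Haar(\sO(n))$ and $F(\bQ) \colonequals p(\eta\bQ\bP_*\bQ^\top)$, I note that $\bJ \mapsto p(\bJ)$ is $\tfrac{1}{2}$-Lipschitz in operator norm: for $\bx \in \{\pm 1\}^n$ and any symmetric $\bDelta$, $|\bx^\top\bDelta\bx| \leq n\|\bDelta\|_{\mathrm{op}}$, whence $|\log\what{Z}(\bJ + \bDelta) - \log\what{Z}(\bJ)| \leq n\|\bDelta\|_{\mathrm{op}}/2$. Combined with the $2$-Lipschitz dependence of $\bQ\bP_*\bQ^\top$ on $\bQ$ in operator norm (using $\|\bP_*\|_{\mathrm{op}} = 1$), $F$ is $\eta$-Lipschitz in Frobenius norm on $\sO(n)$. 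The standard Gromov--Milman concentration inequality for the Haar measure on $\sO(n)$ then gives $|F(\bQ) - \Ex F(\bQ)| = O(\eta\sqrt{\log n / n}) = o(1)$ with high probability. Combining with Step~2 yields $p(\eta\bP) \leq \eta/2 - \tfrac{1}{2}\log\eta - \tfrac{1}{2} + \delta$ with high probability, as required.

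\textbf{Main obstacle.} The substantive insight lies in Step~2: for $\eta > 1$, the infimum in Lemma~\ref{lem:spherical-bound} is approached in the degenerate limit $s \searrow \eta/2$ as $\alpha \to 0^+$, and this degenerate limit evaluates precisely to the target $\eta/2 - \tfrac{1}{2}\log\eta - \tfrac{1}{2}$. The only technicality is arranging the double limit so that $\tfrac{\alpha}{2}\log(1/c)$ stays small, for which picking $c$ first and $\alpha$ much smaller suffices. The other two steps are routine applications of the invoked tools.
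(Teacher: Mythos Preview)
Your proof is correct and follows essentially the same approach as the paper: spherical reduction via Lemmas~\ref{lem:spherical-dom} and~\ref{lem:spherical-bound}, sending $s \searrow \eta/2$ and $\gamma \downarrow 1$ to reach the target value, then Lipschitz concentration under the Haar measure. The only cosmetic differences are that the paper parametrizes the projection via the Stiefel manifold ($\bP = \bV\bV^\top$) rather than $\sO(n)$-conjugation and bounds the Lipschitz constant directly in Frobenius norm, but both routes yield the same constant $\eta$ and the same conclusion.
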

\begin{proof}
    The law $\proj(\QQ)$ is that of an orthogonal projection to a uniformly random subspace of $\RR^n$ of dimension $N^{\prime}$.
    In particular, the law of $\bP$ is orthogonally invariant, so by Lemma~\ref{lem:spherical-dom} and Lemma~\ref{lem:spherical-bound} we have
\begin{align*}
  \Ex_{\bP \sim \proj(\QQ)} p(\eta \bP)
  &\leq \Ex_{\bP \sim \proj(\QQ)}p^{\SS}(\eta\bP) \\
  &\leq \Ex_{\bP \sim \proj(\QQ)}\inf_{s > \eta / 2}\left\{ s - \frac{1 + \log(2)}{2} - \frac{1}{2n} \sum_{i = 1}^n \log\left(s - \frac{\lambda_i(\eta \bP)}{2}\right) \right\} \numberthis \label{eq:J-evals} \\
  &= \inf_{s > \eta/2}\left\{ s - \frac{1 + \log(2)}{2} - \frac{1 - \gamma^{-1}}{2}\log\left(s - \frac{\eta}{2}\right) - \frac{\gamma^{-1}}{2}\log(s) \right\} \numberthis \label{eq:before-s}
    \intertext{and, taking $s$ close to $\eta / 2$ and $\gamma$ sufficiently close to 1 depending on $\eta$, $s$, and $\delta$, we may ensure that}
  &\leq \frac{\eta}{2} - \frac{1 + \log(2)}{2} - \frac{1}{2}\log\left(\frac{\eta}{2}\right) + \frac{\delta}{2} \\
  &= \frac{\eta}{2} - \frac{1}{2} - \frac{1}{2}\log(\eta) + \frac{\delta}{2}. \numberthis
\end{align*}
The rest of the result then follows by a standard concentration inequality for $p(\eta \bP)$.
This may be derived by observing that we may sample $\bP = \bV\bV^{\top}$ where $\bV$ is drawn according to the Haar measure on the Stiefel manifold (the manifold of matrices of a given shape with orthonormal columns).
This Haar measure satisfies a log-Sobolev inequality, which implies the concentration of Lipschitz functions of $\bV$ (see the discussion following Theorem~2.4 in~\cite{Ledoux-2001-Conc} and Theorem~5.5 of \cite{Meckes-2019-RandomMatrixCompactGroups}).
Namely, if $L$ is the Lipschitz constant of the map $\bV \mapsto p(\eta \bV\bV^{\top})$ when the domain is endowed with the Frobenius norm, then
\begin{equation}
    \Prx_{\bP \sim \proj(\QQ)}\bigg[|p(\eta\bP) - \EE p(\eta\bP)| \geq t\bigg] \leq 2\exp\left(-\frac{Cnt^2}{L^2}\right)
\end{equation}
for an absolute constant $C$.
Towards computing the Lipschitz constant, we have
\begin{align*}
  |\bx^{\top}\bU\bU^{\top}\bx - \bx^{\top}\bV\bV^{\top}\bx|
  &= |\langle \bx\bx^{\top}, \bU\bU^{\top} - \bV\bV^{\top} \rangle| \\
  &\leq \|\bx\bx^{\top}\|_F\|\bU\bU^{\top} - \bV\bV^{\top}\|_F \\
  &= n\|\bU\bU^{\top} - \bV\bV^{\top}\|_F \\
  &= n\|\bU(\bU - \bV)^{\top} - (\bV - \bU)\bV^{\top}\|_F \\
  &\leq 2n \|\bU - \bV\|_F, \numberthis
\end{align*}
where we use the Cauchy-Schwarz inequality for the trace inner product of matrices and the triangle inequality and invariance under isometries of the Frobenius norm.
Therefore,
\begin{align*}
  p(\eta\bU\bU^{\top})
  &= \frac{1}{n}\log \left(\sum_{\bx \in \{\pm 1\}^n} \exp\left(\frac{\eta}{2}\bx^{\top}\bU\bU^{\top}\bx\right)\right) \\
  &\leq \frac{1}{n}\log \left(\sum_{\bx \in \{\pm 1\}^n} \exp\left(\frac{\eta}{2}\bx^{\top}\bV\bV^{\top}\bx + n \cdot \eta \|\bU - \bV\|_F \right)\right) \\
  &= p(\eta\bV\bV^{\top}) + \eta\|\bU - \bV\|_F, \numberthis
\end{align*}
and combining with symmetrical calculations shows
\begin{equation}
    |p(\eta\bU\bU^{\top}) - p(\eta\bV\bV^{\top})| \leq \eta\|\bU - \bV\|_F,
\end{equation}
so we may take $L = \eta$ for the Lipschitz constant.
In particular, with high probability $|p(\eta\bP) - \EE p(\eta\bP)| \leq \delta/2$, and the proof is complete.
\end{proof}

We remark that it is a simple computation to verify that, while this choice may be subtler in general, our choice of $s$ close to $\eta / 2$ in the proof is close to optimal in the parameter regime we are working in by taking a derivative in \eqref{eq:before-s}.
It is also clear that our choice of $\bJ$ to be a scaled low-rank projection matrix is optimal (among orthogonally invariant random matrices, which just leaves us to choose a distribution of eigenvalues) for our argument, which seeks to have the difference between the largest and smallest eigenvalues be large while making the quantity in \eqref{eq:J-evals} small.

We now proceed to the remainder of the main proof.
\begin{proof}[Proof of Theorem~\ref{thm:free-energy}]
    Let $\epsilon > 0$ be as in the statement of the Theorem.
    Let us take $\eta \colonequals 1 + \epsilon$.
    We will show below the crucial fact that, for any $\eta > 1$,
    \begin{equation}
        \label{eq:main-comparison}
        \bigg(\underbrace{\frac{\eta}{2} - \log(2) + \sup_{x \in [0, 1]} \left\{ H(x) + 2\eta x^2 - 2\eta x\right\}}_{\equalscolon \, c_{\PP}}\bigg) - \bigg(\underbrace{\frac{\eta}{2} - \frac{1}{2} - \frac{1}{2}\log(\eta)}_{\equalscolon \, c_{\QQ}}\bigg) \stackrel{?}{>} 0,
    \end{equation}
    but for the sake of exposition let us take this for granted for now.

    Let $\delta = \delta(\eta) \colonequals \frac{1}{6}(c_{\PP} - c_{\QQ}) > 0$.
    By Lemma~\ref{lem:QQ}, we may choose $\gamma = \gamma(\eta, \delta) > 1$ such that, when $\bP \sim \proj(\QQ)$, then with high probability
    \begin{equation}
        p(\eta\bP) \leq c_{\QQ} + \delta.
    \end{equation}
    By Lemma~\ref{lem:PP}, we may choose $\beta = \beta(\eta, \gamma) \in (-1, 0)$ such that, when $\bP\sim \proj(\PP)$, then with high probability
    \begin{equation}
        p(\eta\bP) \geq c_{\PP} - \delta.
    \end{equation}

    Suppose $\aapprox$ is an approximator satisfying the estimate in the Theorem for $\epsilon$ arbitrary as in the Theorem statement and the $\delta = \delta(\epsilon)$ we have specified above.
    We use $\aapprox$ to define a hypothesis test $f: (\RR^n)^N \to \{\texttt{p}, \texttt{q}\}$ as follows.
    Let $c \colonequals \frac{1}{2}(c_{\QQ} + c_{\PP})$.
    We then take
    \begin{equation}
        f(\by_1, \dots, \by_N) \colonequals \left\{ \begin{array}{ll} \texttt{p} & \text{if } \frac{1}{n}\, \aapprox(\eta \cdot \proj(\by_1, \dots, \by_N)) > c, \\ \texttt{q} & \text{otherwise}.\end{array} \right.
    \end{equation}

    By our assumption on $\aapprox$ and the above claims about $p(\eta\bP)$, we have with high probability when $(\by_1,\dots, \by_N) \sim \PP$ that
    \begin{equation}
        \frac{1}{n}\, \aapprox(\eta \cdot \proj(\by_1, \dots, \by_N)) \geq c_{\PP} - 2\delta = c + \delta,
    \end{equation}
    and with high probability when $(\by_1,\dots, \by_N) \sim \QQ$ that
    \begin{equation}
        \frac{1}{n}\, \aapprox(\eta \cdot \proj(\by_1, \dots, \by_N)) \leq c_{\QQ} + 2\delta = c - \delta,
    \end{equation}
    and therefore indeed $f$ will be a polynomial time hypothesis test that distinguishes $\PP$ from $\QQ$ with high probability, refuting Conjecture~\ref{conj:wishart}.

\begin{figure}
    \begin{center}
        \includegraphics[scale=0.9]{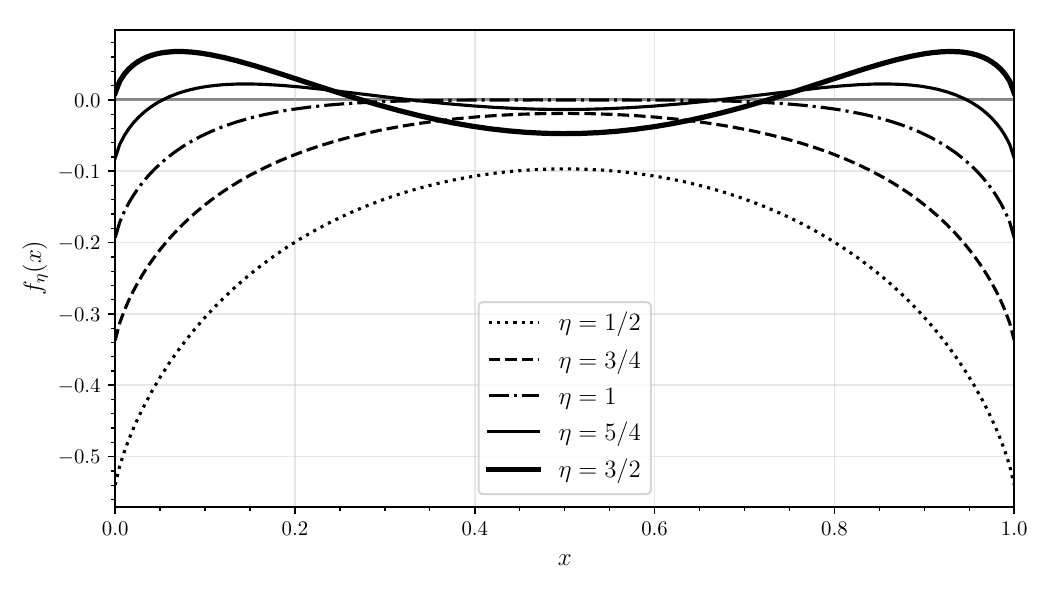}
    \end{center}
    \vspace{-2em}
    \caption{We plot $f_{\eta}(x)$ as appears at the end of the proof of Theorem~\ref{thm:free-energy} over $x \in [0, 1]$ for several values of $\eta > 0$.}
    \label{fig:f-eta}
\end{figure}

    It remains to show \eqref{eq:main-comparison}.
    Upon rearranging, this is equivalent to
    \begin{equation}
        c_{\PP} - c_{\QQ} = \sup_{x \in [0, 1]} \bigg\{ \underbrace{H(x) + 2\eta x^2 - 2\eta x + \frac{1}{2}\log(\eta) + \frac{1}{2} - \log(2)}_{\equalscolon f_{\eta}(x)} \bigg\} \stackrel{?}{>} 0.
    \end{equation}
    To verify this, we evaluate at $x_0 = x_0(\eta) \colonequals\frac{1}{2}(1 + \sqrt{1 - \frac{1}{\eta}})$ (this happens to be an inflection point of $f_{\eta}(x)$ between a local minimum and local maximum when $\eta > 1$; see Figure~\ref{fig:f-eta}).
    We find after simplifying that
    \begin{align*}
      f_{\eta}(x_0(\eta))
      &= \frac{1}{2}\log(\eta) - \log(2) + H\left(\frac{1}{2} + \frac{1}{2}\sqrt{1 - \frac{1}{\eta}}\right). \numberthis
    \end{align*}
    We have $f_1(x_0(1)) = 0$, so it suffices to show that, for all $\eta > 1$, $\frac{d}{d\eta}f_{\eta}(x_0(\eta)) > 0$.
    Calculating this derivative gives
    \begin{align*}
        \frac{d}{d\eta}f_{\eta}(x_0(\eta))
      &= \frac{1}{4\eta^2\sqrt{1 - \frac{1}{\eta}}}\left(2\eta\sqrt{1 - \frac{1}{\eta}} - \log\left(\frac{1 + \sqrt{1 - \frac{1}{\eta}}}{1 - \sqrt{1 - \frac{1}{\eta}}}\right)\right).\numberthis
    \end{align*}
    Introducing $\theta \colonequals \sqrt{1 - \frac{1}{\eta}}$, the term in parentheses above is
    \begin{equation}
        g(\theta) \colonequals \frac{2\theta}{1 - \theta^2} - \log\left(\frac{1 + \theta}{1 - \theta}\right),
    \end{equation}
    and it now suffices to show that this is positive on $\theta \in (0, 1)$.
    Again, $g(0) = 0$, and the derivative is
    \begin{equation}
        \frac{d}{d\theta}g(\theta) = \frac{4\theta^2}{(1 - \theta^2)^2},
    \end{equation}
    which is strictly positive on $\theta \in (0, 1)$, completing the proof.
\end{proof}
To illustrate the last calculation, we plot the curves $f_{\eta}(x)$ for various $\eta$ in Figure~\ref{fig:f-eta}, observing that indeed $\eta = 1$ is the transition point between there existing an $x \in [0, 1]$ such that $f_{\eta}(x) > 0$ and there existing no such $x$.

\begin{remark}
    One may ask whether the hypothesis test described above, used with an approximator of the free energy based on Glauber dynamics and some $\eta < 1$ but close to 1, is actually an effective testing algorithm when $\beta^2 > \gamma$, in which case it is computationally easy to distinguish $\PP$ from $\QQ$ (by thresholding the largest eigenvalue of the sample covariance, per the BBP transition).
    This is not the case in general: our analysis, in particular of the quantity $c_{\QQ}$, crucially relies on being able to choose $\gamma$ close to 1, so that $\bJ$ is a rescaled low-rank projection.
    It is intuitive that $\bJ$ having high rank would be an issue: in that case, the contribution of the planted hypercube vector close to the row space of $\bJ \sim \proj(\PP)$ is relatively small, and there is no longer a macroscopic difference in free energy as compared to $\bJ \sim \proj(\QQ)$.
\end{remark}

\begin{remark}
    We note that, as is visible in Figure~\ref{fig:f-eta}, one may show that for $\eta \leq 1$ the optimizer of the maximization in the lower bound on the planted model in Lemma~\ref{lem:PP} is $x^{\star} = \frac{1}{2}$, while once $\eta > 1$ there are superior values of $x$ other than $\frac{1}{2}$.
    In the former regime, we have $c_{\PP} = 0$, while in the latter we have $c_{\PP} > 0$.
    This may be interpreted as the planted hypercube vector not contributing macroscopically to the free energy of the planted model at high temperature $\eta \leq 1$, since $x = \frac{1}{2}$ corresponds to the contribution to the free energy of hypercube vectors uncorrelated (coinciding in half the coordinates) with the planted one and $c_{\PP}$ is a lower bound derived in Lemma~\ref{lem:PP} by considering only contributions due to correlations with the planted hypercube vector.
\end{remark}

\section*{Acknowledgments}
Thanks to Afonso Bandeira and Alex Wein for helpful comments on a draft, and to Andrea Montanari for introducing me to the argument described in Section~\ref{sec:related}.
Thanks also to the anonymous reviewers for several insightful comments.

\bibliographystyle{alpha}
\bibliography{main}

\end{document}